\newtheorem{theorem}{Theorem}[section]
\newtheorem{lemma}[theorem]{Lemma}
\newtheorem{assumption}{Assumption}
\newtheorem{remark}{Remark}[section]
\newcommand\floor[1]{\lfloor#1\rfloor}
\begin{document}

\title{Principal Component Analysis using Frequency Components of Multivariate Time Series 
\footnote{AMS subject classification. Primary: 62M10. Secondary: 62M15.}
\footnote{Keywords and phrases: Multivariate time series, dimension reduction, principal component analysis, spectral domain, spectral matrix}
}

\author{
Raanju R. Sundararajan \\ Southern Methodist University }

\vspace{0.3cm}

\date{}

\maketitle

\begin{abstract}

\noindent Dimension reduction techniques for multivariate time series decompose the observed series into a few useful independent/orthogonal univariate components. We develop a spectral domain method for multivariate second-order stationary time series that linearly transforms the observed series into several  groups of lower-dimensional multivariate subseries. These multivariate subseries have non-zero spectral coherence among components within a group but have zero spectral coherence among components across groups. The observed series is expressed as a sum of frequency components whose variances are  proportional to the spectral matrices at the respective frequencies. The demixing matrix is then estimated using an eigendecomposition on the sum of the variance matrices of these frequency components and its asymptotic properties are derived. Finally, a consistent test on the cross-spectrum of pairs of components is used to find the desired segmentation into the lower-dimensional subseries. The numerical performance of the proposed method is illustrated through simulation examples and an application to modeling and forecasting wind data is presented.

\end{abstract}

\hrule
\hrulefill

\section{Introduction}

Analysis of multivariate time series is an interesting part of statistics with several important applications in areas such as engineering, economics, finance, neuroscience and geoscience. Time series data from these areas appear often in multivariate form thereby making the statistical analysis challenging from a computational and theoretical standpoint. As an example in environmental science, air pollution related variables such as $CO$ and $NO_2$ gathered from different monitoring stations across time result in a multivariate time series. The problem of importance here is to forecast these variables over the next few time steps. As an example in engineering, wind data recorded either hourly or sub-hourly from several geographical locations result in a multivariate time series. Modeling and forecasting this wind series is vital in successfully utilizing this energy source. To maintain a reliable energy network, it is necessary to plan for breaks in the influx of energy from this source and forecasting assists in managing the conventional and renewable generators in the network. An effective dimension reduction technique for the observed multivariate series would greatly assist in improving forecasting accuracy. Multivariate autoregressive moving average models (\citealp{brockwell_davis}), also known as VARMA, are commonly used to model dependence in multivariate stationary time series due to its straightforward implementation, interpretation and its ability to provide predictions. VARMA models, however, come with several challenges related to estimation, identifiability, lag order mis-specification and these challenges affect subsequent tasks such as forecasting. The multivariate nature of the time series data alone poses a computational difficulty for VARMA models. Dimension reduction hence becomes a very relevant and important problem in the analysis of multivariate time series.  

Among various dimension reduction methods, factors models and independent component analysis (ICA)  (\citealp{lam2012}, \citealp{mattesondoc}, \citealp{ombao_motta_2012})  focus on simplifying the analysis of multivariate time series  by linearly transforming the series into a few useful independent/orthogonal components or factors. Another approach is via the restriction of parameters in parametric multivariate time series models. \citet{davis_sparse_ar} propose an approach wherein the significant VAR (vector autoregressive) model coefficients are detected using a test on the partial spectral coherence among the components. Constrained maximum likelihood estimation of the VAR model coefficients is another technique proposed in Chapter 5 \citet{lutkepohl} and imposing structural restrictions on the VAR model is another approach in Chapter 9 of \citet{lutkepohl}.

Principal component analysis (PCA) for time series is another well known method for dimension reduction. The classical  dynamic PCA for time series from Chapter 9 in \citet{brillinger81} expresses the observed multivariate series as a two sided moving average of an uncorrelated vector process. This process is called the principle component series and has a diagonal spectral matrix with no spectral coherence between any two components. Chapter 9 of \citet{brillinger81} also considers a frequency-wise joint modeling of a frequency component of a $p$-variate series along with its Hilbert transform to obtain a $2p$-variate series. This $2p$-variate real valued series now has a real symmetric covariance matrix and a frequency-wise traditional PCA can be done on it. PCA in the multivariate time series, stationary and nonstationary, setting has been attempted in  \citet{stockwatson}, \citet{slex_ombao}, \citet{ombao_2006} wherein the objective is to find contemporaneous linear transforms of the observed series resulting in several orthogonal univariate subseries. Unlike the above mentioned works, \citet{yao2018} model the observed multivariate series as linearly generated (i.e using a mixing matrix) by a latent series that can be segmented into several multivariate subseries. Correlation exists between components within a subseries but not between components across subseries. Abbreviated as TS-PCA, their method performs an eigenanalysis on the sums of quadratic forms of lagged covariance matrices and the eigenvectors are utilized as an initial solution to the demixing matrix (inverse of the mixing matrix). Then tests of cross correlations are carried out to permute the rows of this demixing matrix and leading to the desired segmentation. As an extension to multivariate ICA,  \citet{cardoso1998} consider a similar setup but without considering the lagged dependence. \citet{beloucharani} pursue a related goal of diagonalization of lagged covariance matrices using an iterative optimization algorithm. Time domain methods listed here rely on pre-whitening (making the variance matrix diagonal) as a required first step. For example in \citet{yao2018}, the  component series are fitted with univariate autoregressive (AR) models with order chosen by AIC and the residuals from the fit are considered as the pre-whitened series. This task of fitting AR models to processes having strong periodocities or strong moving average components can be difficult as it can potentially lead large AR model orders.

We propose a new spectral domain method that models the observed multivariate second-order stationary time series as being linearly generated by a latent multivariate series. This latent series contains several subseries wherein components withing a subseries are allowed to have non zero spectral coherence but components from different subseries have zero spectral coherence. The key idea is to express the observed multivariate series as a sum of mutually exclusive and exhaustive frequency components. These frequency components are uncorrelated across unequal frequencies. First, an eigendecomposition on the sum of variance matrices of these frequency components results in an initial solution for the demixing  matrix and the recovered latent series. Second, a consistent test of spectral coherence on pairs of components of this recovered series is carried out leading to re-arrangement (permutation) of the components into the desired segmentation. The advantages of this approach are: a). Unlike the time domain methods above, our  spectral domain approach requires no pre-whitening as an initial step. This is particularly significant when dealing with stationary processes with strong periodocities wherein pre-whitening can be challenging because an attempt fit an AR model would potentially lead to very high orders. Our simulation results and discussions in Section \ref{sec:simulations} shed further light on this problem, b). With no pre-whitening required, our approach is completely nonparametric and unlike the conventional dynamic PCA for time series, no guarantees are put in place regarding the existence of a segmentation into lower-dimensional subseries and c). Our approach can further be modified to carry out PCA over specified frequency bands. This is particularly useful while analyzing certain types of data, such as those from neuroscience experiments or those for assessing wind turbine loads,  wherein there is great interest to understand band specific behavior.       

In Section \ref{sec:methodology} we start with the required preliminary notations and definitions and then  describe our proposed method in detail. The asymptotic properties of our method are discussed in Section \ref{sec:theory}. We compare the performance of our method with competitors using simulation examples in Section \ref{sec:simulations}. We discuss here the advantages of our spectral domain method over its time domain counterparts in models that have strong periodic components and/or strong moving average components. In Section \ref{sec:application} an application of our proposed method in forecasting wind data attributes is presented. We consider the multivariate modeling of sub-hourly time series data of wind speeds gathered across several wind farm locations in a given geographical region. Forecasting accuracy is used a measure of comparing the performance of the various methods. The concluding remarks are in Section \ref{sec:conclusion}.

\section{Methodology}
\label{sec:methodology}

In this section we describe our method to transform the observed multivariate time series into groups of lower-dimensional multivariate subseries that do not have any spectral coherence across the different subseries. The theoretical properties of our method is given in Section \ref{sec:theory}. We begin with the model setup followed by a description of the proposed technique.

 Let $X_t$, $t=1,2,\hdots,T$, the observed $p$-variate  zero mean second-order stationary time series  be written as 
\begin{equation}\label{eq:main_model}
X_t = A Y_t  = A \begin{bmatrix}
Y_{1,t} \\
Y_{2,t} \\
\vdots \\
Y_{m,t}
\end{bmatrix}
\end{equation}
where $Y_t$ is a $p$-variate zero mean second-order stationary time series containing the $m$ subseries $Y_{i,t} \in \mathbb{R}^{d_i}$, $i=1,2,\hdots,m$ and $A$ is the unknown invertible mixing matrix. The $m$ subseries $Y_{i,t}$ are such that there is no spectral coherence between components from different subseries. More precisely, for any two components $a^{(i)}$ and $b^{(j)}$, with $a^{(i)}$ belonging to subseries $Y_{i,t}$ and $b^{(j)}$ belonging to subseries $Y_{j,t}$ and $i \neq j$, we have  
\begin{equation} \label{eq:model_assumption_cross_spectral}
f_{Y,a^{(i)},b^{(j)}}(\omega) = 0 \;\; \textrm{and} \;\; R_{Y,a^{(i)},b^{(j)}} (\omega) = \frac{|f_{Y,a^{(i)},b^{(j)}}(\omega) |^2}{f_{Y,a^{(i)},a^{(i)}}(\omega) \; f_{Y,b^{(j)},b^{(j)}}(\omega)} =  0 \;\; \forall  \; \omega \in \lbrack -\pi, \pi \rbrack,
\end{equation}
where $f_Y(\omega) = \Big( f_{Y,a,b}(\omega) \Big)_{a,b=1,2,\hdots,p}$ is the $p \times p$ spectral matrix of $Y_t$ and $f_{Y,a,b}(\cdot)$ is the cross-spectrum between the series $Y_{a,t}$ and $Y_{b,t}$.    $R_{Y,a,b}(\omega)$ is the spectral coherence of the series $Y_{a,t}$ with series $Y_{b,t}$ at frequency $\omega$. The spectral coherence $R_{Y,a,b}(\omega)$ is an indicator of linear relationship between series $Y_{a,t}$ and $Y_{b,t}$. Further, by writing the autocovariance function $\gamma_{Y,a,b}(h) = cov(Y_{a,t},Y_{b,t+h})$  as $\gamma_{Y,a,b}(h) = \int_{-\pi}^{\pi} f_{Y,a,b}(\omega) e^{-ih\omega}d \omega$ we see that if two component series are uncorrelated, the cross-spectrum and spectral coherence are zero.

The $p \times p$ mixing matrix $A$ is assumed to be orthogonal with $A' A = I_p$. Under this setup, $A$ cannot be uniquely determined but $\mathcal{C}(A)$, the column space of $A$, can be uniquely determined. In view of the setup in \eqref{eq:main_model}, the matrix $A$ can be written as $A=(A_1,A_2,\hdots,A_m)$ where the matrix $A_i$, $i=1,2,\hdots,m$, has $p$ rows and $d_i$ columns. This results in
\begin{equation}\label{eq:latent_yt_representation}
Y_{i,t} = A_{i}'X_t \;\; \textrm{for}  \;\; i=1,2,\hdots,m.
\end{equation}

\noindent The objective here is to find $A$ and the resulting latent segmentation of $A'X_t = Y_t$ leading to the $m$ subseries $Y_{i,t}$, $i=1,2,\hdots,m$. Such a transformation provides a reduction in the dimension of the observed $p$-variate series into the $m$ subseries in lower dimensions thereby assisting in related tasks such as forecasting. It is important to note that no assumptions are made regarding existence of such a segmentation into the $m$ subseries i.e. $m=1$ is allowed as a possibility.

\noindent To achieve this segmentation we look into the spectral representation of a stationary time series (\citealp{brillinger81}, \citealp{brockwell_davis}). We have for $Y_t$ and $X_t$,

\begin{equation}\label{eq:spectral_representation}
Y_t = \int_{-\pi}^{\pi} e^{it\omega} dZ_Y(\omega) \;\; \textrm{and} \;\; X_t = \int_{-\pi}^{\pi} e^{it\omega} dZ_X(\omega),
\end{equation}    
where $Z_Y(\omega)$ and $Z_X(\omega)$ are orthogonal increment processes on $\lbrack -\pi, \pi  \rbrack$, $var \Big( dZ_Y(\omega) \Big) = f_Y(\omega)$ and $var \Big( dZ_X(\omega) \Big) = f_X(\omega)$ are the spectral density matrices. The frequency component of $Y_t$ corresponding to some frequency $\omega \in ( 0 ,\pi )$ can be written as
\begin{equation}\label{eq:frequency_components}
Y(t,\omega) = \int_{-\omega-\delta}^{-\omega+\delta} e^{it\lambda} dZ_Y(\lambda) +  \int_{\omega-\delta}^{\omega+\delta} e^{it\lambda} dZ_Y(\lambda)
\end{equation}
for some small $\delta>0$. Writing $U_Y(\omega) = Re\Big(Z_Y(\omega)\Big)$ and $V_Y(\omega) = -Im\Big(Z_Y(\omega)\Big)$  where $Re(\cdot)$ and $Im(\cdot)$ denote the real and imaginary parts, it can be seen that for a small $\delta > 0 $,
\begin{equation}\label{eq:freq_components_alt}
Y(t,\omega) \approx 2 \Big[ \cos(t\omega) dU_Y(\omega) + \sin(t\omega) dU_V(\omega)  \Big],
\end{equation} 
where $cov \Big( dU_Y(\omega), dU_Y(\lambda) \Big) = cov \Big( dV_Y(\omega) ,  dV_Y(\lambda) \Big) =  1_{ \{ \omega=\lambda \} } \frac{1}{2}   Re\Big( f_Y(\omega) \Big)$ and $cov \Big( dU_Y(\omega) , dV_Y(\lambda) \Big) = 1_{ \{ \omega = \lambda \} } \frac{1}{2} Im \Big( f_Y(\omega) \Big)$ for some $0< \omega,
\lambda < \pi$. Further, from \eqref{eq:spectral_representation}-\eqref{eq:freq_components_alt} it can be observed that the series $Y_t$ can be written as a sum of mutually exclusive and exhaustive frequency components 
\begin{equation}\label{eq:yt_sum_of_frequency_components}
Y_t = \sum_{j=1}^F Y(t,\omega_j)
\end{equation}
where $F = \frac{\pi}{2\delta}$ and $\{ \omega_1, \omega_2, \hdots, \omega_F \}$ is a discretized set of  frequencies in $(0,\pi)$. Here,  for $j \neq k$ and any $t,s$ we have  
\begin{equation}
cov \Big( Y(t,\omega_j), Y(s,\omega_k) \Big) = 0.
\end{equation}
See Section 4.6 of \citet{brillinger81}. From \eqref{eq:freq_components_alt} it follows that the variance matrix of the frequency component $Y(t,\omega)$ is proportional to $Re \Big( f_Y(\omega) \Big)$. 

With $Y_t$ assumed to have the latent segmentation structure given in \eqref{eq:main_model}, $var \Big( Y(t,\omega) \Big) = 4\delta Re\Big( f_Y(\omega) \Big)$ can be treated as a real symmetric block-diagonal nonnegative definite matrix. In view of \eqref{eq:main_model}, the frequency component of $X_t$, namely $X(t,\omega)$, and its variance can be written as 
\begin{equation}\label{eq:xt_freq_component}
X(t,\omega) = A \; Y(t,\omega)\;\; \textrm{and} \;\;  Re \Big( f_X(\omega) \Big) = A \; Re \Big( f_Y(\omega)  \Big) A'. 
\end{equation}
 
\noindent To uncover the latent segmentation in $Y_t$, we consider the $p \times p$ matrices $S_X$ and $S_Y$ given by 
\begin{equation}\label{eq:sx_sy_matrices}
S_X = \sum_{j=1}^F Re \Big( f_X(\omega_j) \Big) = A \; S_Y A' = A  \Big[ \sum_{j=1}^F Re \Big( f_Y(\omega_j) \Big) \Big] A'.
\end{equation} 
where $Re(\cdot)$ denotes the real part. Observe that $S_X$ and $S_Y$ are sums of $F$ real symmetric nonnegative definite matrices. The eigendecompositions $S_X = L_X D_X L_X' $ and $S_Y = L_Y D_Y L_Y'$, where $L_X$ and $L_Y$ are the orthogonal matrices of eigenvectors of $S_X$ and $S_Y$ respectively and $D_X$ and $D_Y$ are diagonal matrices with diagonal entries as eigenvalues of $S_X$ and $S_Y$ respectively, lead to 
\begin{equation}\label{eq:lx_ly_relationship}
L_X' X_t = L_Y' Y_t.
\end{equation} 

With $Y_t$ having the latent segmentation structure given in \eqref{eq:main_model}, $S_Y$ can be partitioned into blocks of positive definite matrices i.e $S_Y = \mbox{diag}(S_{Y,1},S_{Y,2},\hdots,S_{Y,m})$ wherein the eigenvectors and eigenvalues of the blocks $S_{Y,i}$, $i=1,2,\hdots,m$, are that of $S_Y$ as well. Assuming two blocks do not share an eigenvalue, Proposition 1 in \citet{yao2018}  implies that $L_Y$ is a block-diagonal orthogonal matrix with the same block structure as $S_Y$. However, this ordering of the blocks in $S_Y$ is unknown and a column permutation of $L_Y$ leads to a block-diagonal orthogonal matrix. Thus \eqref{eq:lx_ly_relationship} implies that re-arranging the components of $L_X'X_t$ results in the desired segmentation into the $m$ groups (subseries) of components in \eqref{eq:main_model}. In Section \ref{sec:permuation_subseries}, we describe our method to find these $m$ groups from the components of $L_X' X_t$ such that that there is no spectral coherence between components from different groups (subseries). 

Note that the above approach leading to \eqref{eq:lx_ly_relationship} is similar to that of \citet{yao2018} except that the sum of quadratic forms of lagged covariance matrices in their work is replaced by the sum of the real parts of spectral matrices at various frequencies. Unlike their work, our subsequent testing procedure described in Section \ref{sec:permuation_subseries} that  finds groups among components of $L_X^{'} X_t$ does not require pre-whitening of the observed series. The finite sample performance of their approach requires pre-whitening to ensure meaningful comparisons between the different pairs in $L_X^{'} X_t$. In order to find the permutation (re-arrangement of the components of $L_X^{'} X_t$ leading to the desired segmentation), their method fits AR models to individual components using AIC criterion and utilizes the residuals. Certain ARMA processes that exhibit strong pseudo-periodicities and strong moving average components pose a great challenge to pre-whiten and often result  in a very large AR model order. Our approach is completely nonparametric as it avoids any kind of pre-whitening and is more suitable for processes that contain strong periodicities. In Section \ref{sec:simulations}, Models 2 and 3  have strong periodic components and we illustrate the better performance of our method in these situations.

\begin{remark}
\noindent (a). The $S_X$ and $S_Y$ matrices are sums of variances of the frequency components. Similar to the argument on the lag selection of cross-covariance matrices in \citet{yao2018}, we note here that the choice for $F$ is not a sensitive choice because we only need an adequate set of frequencies carrying information on the block structure in $S_Y$. In small sample situations in Sections \ref{sec:simulations}, \ref{sec:application}, we consider a discretized sequence of the interval $(0, \pi)$ as a proxy for the $F$ frequencies $ \{ \omega_1, \omega_2, \hdots, \omega_F \}$ needed in \eqref{eq:sx_sy_matrices}. Our simulation studies with different choices of the discretized interval show the lack of sensitivity of our method towards this choice.

\noindent (b). In our formulation leading to \eqref{eq:sx_sy_matrices}, we only consider the sum of variance matrices of the frequency components. It can be seen from \eqref{eq:frequency_components}-\eqref{eq:freq_components_alt} that $cov(Y(t,\omega), Y(t+h , \omega))$ for $h > 0$ involves the term
\begin{equation}
\int_{\omega - \delta}^{\omega+ \delta} \cos(h \lambda) Re \Big( f_Y(\lambda) \Big) d\lambda.
\end{equation}
Since the expression for these lagged covariance matrices involves again the real part of the spectral matrices, it is viewed as surplus to the useful information needed to uncover the block structure in the spectral matrix.  
   
\end{remark}

\noindent In order to estimate the quantities in $S_X$ and $L_X$ we first obtain a kernel estimator of the spectral matrix. We first define the discrete Fourier transform (DFT) and the periodogram of a p-variate series as
\begin{equation*}
 J_X(\omega)= \frac{1}{\sqrt{2\pi T}} \sum_{t=1}^{T}X_t e^{-it\omega},\quad I_{X}(\omega)=J_X(\omega)J_X(\omega)^*,
\end{equation*}
\noindent where $J_X(\omega)^*$ denotes the conjugate transpose, and the estimated $p \times p$ spectral density matrix, for $\omega \in \lbrack -\pi,\pi \rbrack$ is given by
\begin{equation} \label{eq:kernel_spectral_estimator}
\widehat{f}_X(\omega)= \frac{1}{T} \sum_{j=-\floor{\frac{T-1}{2}}}^{\floor{\frac{T}{2}}} \;  \; K_h( \omega - \omega_j) \;  I_{X}(\omega_j),
\end{equation}
where  $\omega_j=\frac{2 \pi}{T} j$ and $K_h(\cdot)=\frac{1}{h}K(\frac{\cdot}{h})$ where $K(\cdot)$ is
a nonnegative symmetric kernel function and $h$ denotes the bandwidth. Certain assumptions on the kernel and bandwidth are needed to establish large sample results and this is discussed in Section \ref{sec:theory}. A sample version of $S_X$ and $L_X$ are then given by   

\begin{equation}\label{eq:sx_matrix_estimated}
\widehat{S}_X = \sum_{j=1}^F Re \Big( \widehat{f}_X(\omega_j) \Big)  = \widehat{L}_X \widehat{D}_X \widehat{L}_X' 
\end{equation}   
where the $p \times p$ matrices $\widehat{L}_X$ and $\widehat{D}_X$ are the matrix of eigenvectors and  diagonal matrix of eigenvalues of $\widehat{S}_X$, respectively. We then consider the components of $\widehat{L}_X^{'} X_t$ and find a permutation of these components leading to the required $m$ subseries wherein there is no spectral coherence between components from different groups. This procedure along with the estimation of $m$ is described next in Section \ref{sec:permuation_subseries}.

\subsection{Finding the $m$ subseries using pairwise testing}   
\label{sec:permuation_subseries}

In this section we describe the method to permute the $p$ components of $L_X' X_t$ to obtain the $m$ subseries $Y_{i,t} \in \mathbb{R}^{d_i}$, $i=1,2,\hdots,m$. We begin with describing the test of spectral coherence between two component series. Unlike the time domain cross-correlation test suggested in \citet{yao2018}, pre-whitening of the components is not a necessary first step for this test.

\noindent Denote $\tilde{Y}_t = L_X' X_t$ and let $f_{\tilde{Y}}(\cdot)$ be its $p \times p$ spectral matrix. For any two univariate component series $\tilde{Y}_{a,t}$ and $\tilde{Y}_{b,t}$, $a,b=1,2,\hdots,p$ and $a \neq b$, we wish to test $H_0:  R_{ \tilde{Y}, a,b} (\omega) = 0,\; \forall \omega \in \lbrack -\pi , \pi\rbrack$ i.e the spectral coherence being zero at all frequencies. As the test statistic, we use the metric
\begin{equation} \label{eq:cross_spectrum_test_statistic}
D(\tilde{Y},a,b) = \int_{-\pi}^{ \pi}  \frac{ |  f_{\tilde{Y},a,b}(\omega)|^{2} }{f_{\tilde{Y},a,a}(\omega)f_{\tilde{Y},b,b}(\omega)}  \;d\omega,
\end{equation}
\noindent where $f_{\tilde{Y},a,b}$ denotes entry $(a,b)$ in the spectral matrix of $\tilde{Y}_t$. An estimated version of the above quantity, denoted as $\widehat{D}(\tilde{Y},a,b)$, can be obtained by plugging in the kernel spectral estimator of $f_{\tilde{Y}}(\omega)$ defined in \eqref{eq:kernel_spectral_estimator}. The large sample distribution of the test statistic in \eqref{eq:cross_spectrum_test_statistic} is discussed in Section \ref{sec:theory} and this result yields the critical values of the test needed in Sections \ref{sec:simulations}, \ref{sec:application}.

After determining the statistical significance of every pair of components in $\tilde{Y}_t = L_X' X_t$, we are left with a graph $G=(V,E)$ wherein the vertex set $V$ corresponds to the set of components in $\tilde{Y}_t$ and the $p \times p$ adjacency matrix $E = (e_{a,b})$ is such that $e_{a,b} = 1$ when the test based on $D(\tilde{Y},a,b)$ resulted in a rejection of $H_0$. Finally, in order to obtain the $m$ subseries of components, we find connected components in $\tilde{Y}_t$. More precisely, two components $a$ and $b$ are placed in the same subseries if either 
\begin{itemize}
\item[(a).] $e_{a,b}=1$ or
\item[(b).] There exists $\{ v_1,v_2,\hdots v_Q \} \subset \{1, 2, \hdots, p \}$ such that $e_{a,v_1} = e_{v_1,v_2} = \hdots = e_{v_{Q-1},v_Q} = e_{v_Q, b} = 1$ for some $Q<p-1$.    

\end{itemize}

\noindent An estimate of this graph and its adjacency matrix lead us to an estimate of $m$, the number of subseries in \eqref{eq:main_model}. The asymptotic results concerning this estimation is provided in Section \ref{sec:theory}. We now include Algorithm \ref{algo:two_step_procedure} that summarizes the steps leading to an estimate of the demixing matrix $A^{-1}$ and the number of subseries $m$.

\begin{algorithm}[H] \label{algo:two_step_procedure}
\caption{Estimating the demixing matrix $A^{-1}$ and number of subseries $m$ from \eqref{eq:main_model}.}
\SetAlgoLined
 \textbf{Output}: Estimates $\widehat{A}^{-1}$ and $\widehat{m}$.  \\
 \textbf{Input}: $p$-variate time series data $X_t$, $t=1,2,\hdots,T$, the discretized set of frequencies in $(0,\pi)$ given by $\mathcal{W} = \{\omega_1 < \omega_2 < \hdots \leq  \omega_{F} \}$, the kernel $K(\cdot)$ and bandwidth $h$.  
 \vspace{0.3cm}
 \begin{itemize}
 \item[1:] Calculate the estimates $\widehat{f}_X(\omega)$ in \eqref{eq:kernel_spectral_estimator} for every $\omega \in \mathcal{W}$ and obtain $\widehat{S}_X$ in \eqref{eq:sx_matrix_estimated}. Compute the Eigendecomposition $\widehat{S}_X = \widehat{L}_X \widehat{D}_X \widehat{L}_X^{'}$. Find the series $\tilde{Y}_t = \widehat{L}_X^{'} X_t$ for $t=1,2,\hdots,T$.   
 \item[2:] For any two univariate component series $\tilde{Y}_{a,t}$ and $\tilde{Y}_{b,t}$, $a,b=1,2,\hdots,p$ and $a \neq b$, test for zero spectral coherence ($H_0$) using the statistic in \eqref{eq:cross_spectrum_test_statistic}. 
 \item[3:] Create graph $G=(V,E)$, the vertex set $V$ corresponds to the set of components in $\tilde{Y}_t$, the adjacency matrix $E = (e_{a,b})$ is such that $e_{a,b} = 1$ when the test from Step 2 resulted in a rejection of $H_0$ (zero spectral coherence).  
 \item[4:] Using $G$ from Step 3, two components $a$ and $b$ are placed in the same subseries if either $e_{a,b}=1$ or there exists $\{ v_1,v_2,\hdots v_Q \} \subset \{1, 2, \hdots, p \}$ such that $e_{a,v_1} = e_{v_1,v_2} = \hdots$ $= e_{v_{Q-1},v_Q} = e_{v_Q, b} = 1$ for some $Q<p-1$. This leads to an estimate $\widehat{m}$ of $m$. 
\item[5:] Using grouping of components in Step 4 find permutation of components of $\tilde{Y}_t$ as $\{1,2,\hdots,p \}$ $\rightarrow \{ \pi(1), \pi(2),\hdots,\pi(p) \} $, obtain the corresponding permutation matrix $P_{\pi}$. Output the estimated demixing matrix $\widehat{A}^{-1} = P_{\pi}^{'} \widehat{L}_X^{'}$.  
  \end{itemize}
\end{algorithm}

\begin{remark}[\bf Frequency band specific PCA]
Certain applications demand the use of a frequency band specific analysis of the multivariate time series.  Neuroscience experiments resulting in data such as the EEG or local field potentials (LFP) contain important information over various known frequency bands like Theta, Alpha, Beta and Gamma. Our approach can be extended to uncover the $m$ latent subseries $Y_t = (Y_{1,t}^{'} , Y_{2,t}^{'} , \hdots, Y_{m,t}^{'})^{'}$ that exist only at a specified frequency band.  Similar to \eqref{eq:model_assumption_cross_spectral}, for any two components $a^{(i)}$ and $b^{(j)}$, with $a^{(i)}$ belonging to subseries $Y_{i,t}$ and $b^{(j)}$ belonging to subseries $Y_{j,t}$ and $i \neq j$, we have  
\begin{equation} \label{eq:model_assumption_cross_spectral_band_specific}
 R_{Y,a^{(i)},b^{(j)}} (\omega) = \frac{|f_{Y,a^{(i)},b^{(j)}}(\omega) |^2}{f_{Y,a^{(i)},a^{(i)}}(\omega) \; f_{Y,b^{(j)},b^{(j)}}(\omega)} =  0 \;\; \forall  \; \omega \in ( \omega_1 , \omega_2 ),
\end{equation}
for some $0 < \omega_1 < \omega_2 < \pi$. To uncover the latent segmentation here we can consider the  matrix $S_X^{(\omega_1,\omega_2)} = \sum_{j=1}^{B} Re \Big( f_X(\lambda_j) \Big)$ where $ \{ \lambda_1, \lambda_2, \hdots, \lambda_B \}$ is a discretized set of frequencies from the interval $(\omega_1,\omega_2)$ and its eigendecomposition $S_X^{(\omega_1,\omega_2)} = L_X^{(\omega_1,\omega_2)} D_X^{(\omega_1,\omega_2)} ( L_X^{(\omega_1,\omega_2)} )^{'}$. Then with the series $\tilde{Y}_t = ( L_X^{(\omega_1,\omega_2)} )^{'} X_t$, for any two univariate component series $\tilde{Y}_{a,t}$ and $\tilde{Y}_{b,t}$, $a,b=1,2,\hdots,p$ and $a \neq b$, we wish to test $H_0: R_{ \tilde{Y}, a,b} (\omega) = 0, \; \forall \omega \in (\omega_1,\omega_2)$. The test statistic from \eqref{eq:cross_spectrum_test_statistic} is now given by 
\begin{equation} \label{eq:cross_spectrum_test_statistic_band_specific}
D_{\omega_1,\omega_2}(\tilde{Y},a,b) = \int_{\omega_1}^{ \omega_2}  \frac{ |  f_{\tilde{Y},a,b}(\omega)|^{2} }{f_{\tilde{Y},a,a}(\omega)f_{\tilde{Y},b,b}(\omega)}  \;d\omega.
\end{equation}
The asymptotic properties of the statistic above in \eqref{eq:cross_spectrum_test_statistic_band_specific}  follow similar to the results of the test statistic in \eqref{eq:cross_spectrum_test_statistic} given in Section \ref{sec:theory}.   
\end{remark}

\subsection{Theoretical properties}
\label{sec:theory}
Here we discuss the theoretical results associated with our proposed method. We begin with the properties of the estimator of the column space $\mathcal{C}(A)$ of the mixing matrix $A$. 

Let $\widehat{A} = (\widehat{A}_1, \widehat{A}_2,\hdots, \widehat{A}_m)$ be the estimate of the mixing matrix $A$. For any two half-orthogonal matrices $B_1, B_2 \; \in \mathbb{R}^{p \times r}$ satisfying $B_1^{'} B_1 = B_2^{'} B_2 = I_r$, we take the distance measure between their column spaces as
\begin{equation}\label{eq:subspace_distance_metric}
M \Big( \mathcal{C}(B_1) , \mathcal{C}(B_2)  \Big) = \sqrt{1 - \frac{1}{r}tr \Big( B_1 B_1^{'} B_2 B_2^{'} \Big) }
\end{equation}
where $tr(\cdot)$ denotes the trace of a matrix. $M \Big( \mathcal{C}(B_1) , \mathcal{C}(B_2)  \Big) = 0$ if the two column spaces are the same and is equal to 1 if they are orthogonal; see \citet{pan_yao_08}. The aim is to show $\mathcal{C}(\widehat{A}_i)$ is consistent for $\mathcal{C}(A_i)$ for each $i=1,2,\hdots,m$. We take $\widehat{L}_X$ from \eqref{eq:sx_matrix_estimated}, upto a column permutation, as a viable estimator for $A$ and show that its columns space consistently estimates $\mathcal{C}(A)$. 

\begin{assumption}\label{as:eigengap}
The $m$ block matrices in $S_Y = \mbox{diag}(S_{Y,1},S_{Y,2},\hdots,S_{Y,m})$ do not share any common eigenvalues and the distance between their smallest eigenvalues is positive. More precisely,
\begin{equation}
\min_{i \neq j}  \min_{\mu \in l( S_{Y,i} ), \; \nu \in l( S_{Y,j} ) } |\mu - \nu| \; > \; 0 
\end{equation}
where $l( S_{Y,i} )$ and $l( S_{Y,j} )$ denote the set of eigenvalues of the matrices $S_{Y,i}$ and $S_{Y,j}$ respectively. 
\end{assumption}

The above assumption relates to the sensitivity to perturbation of an invariant subspace dependent on eigenvalue separation (eigengap); see Chapter 8.1.3 of \citet{golub2012}. We discuss the eigenvalue separation from a computational standpoint in Section \ref{sec:simulations}. There, various choices for the vector process $Y_t$ in model \eqref{eq:main_model} is chosen leading to different eigenvalue separations (eigengap) and we evaluate  the performance of our method in those situations. We will also provide recommendations to select the bandwidth $h$ of the kernel estimator in \eqref{eq:kernel_spectral_estimator} based on the eigenvalue separation. 

 The next two standard assumptions concern the covariance structure of $X_t$ and the kernel and bandwidth used in the spectral matrix estimator in \eqref{eq:kernel_spectral_estimator}. 

\begin{assumption}\label{as:xt}
Let  $X_t$  be a $p$-variate zero mean second-order stationary time series with covariance function $\gamma(h) = cov(X_t,X_{t+h})$ satisfying $\sum_{h=-\infty}^{\infty} |h| \; |\gamma(h)| < \infty$. Further, its spectral matrix $f_X(\omega)$ is non-singular at all $\omega \in \lbrack -\pi ,\pi \rbrack$. 
\end{assumption}

\begin{assumption}\label{as:kernel}
(a). The kernel function $K(\cdot)$ used in \eqref{eq:kernel_spectral_estimator} is bounded, symmetric, nonnegative and Lipschitz-continuous with compact support $\lbrack -\pi ,  \pi \rbrack$ and
\vspace{-0.2cm}
\begin{equation*}
\int_{-\pi}^{ \pi} K(\omega) \textrm{d}\omega = 1.
\vspace{-0.2cm}
\end{equation*}
where $K(\omega)$ has a continuous Fourier transform  $k(u)$ such that
\vspace{-0.2cm}
\begin{equation*}
\int k^2(u)\textrm{d}u < \infty \;\; \textrm{and} \;\; \int k^4(u)\textrm{d}u < \infty.
\vspace{-0.2cm}
\end{equation*}
(b). The bandwidth $h$ is such that $h^{9/2}T \rightarrow 0$ and $h^2 T \rightarrow \infty$ as $T \rightarrow \infty$.
\end{assumption}

\begin{theorem}\label{thm:colspace_consistency} Suppose that Assumptions \ref{as:eigengap}, \ref{as:xt}, \ref{as:kernel} are satisfied and let $\widehat{A} = (\widehat{A}_1, \widehat{A}_2,\hdots, \widehat{A}_m)$ be the estimated mixing matrix after permuting the columns of $\widehat{L}_X$ in \eqref{eq:sx_matrix_estimated}. Then 
\begin{equation}
\max_{i=1,2,\hdots,m} M \Big( \mathcal{C} (A_i) , \mathcal{C} (\widehat{A}_i)  \Big) = O_p \Big( \frac{1}{ \sqrt{Th} } \Big).
\end{equation} 
\end{theorem}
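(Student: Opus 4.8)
The plan is to reduce everything to a perturbation bound for the eigenprojections of $\widehat S_X$ versus $S_X$, and then to control the perturbation $\|\widehat S_X - S_X\|$ in terms of the estimation error of the kernel spectral estimator. First I would recall from the discussion leading to \eqref{eq:lx_ly_relationship} that, under Assumption \ref{as:eigengap}, the matrix $L_Y$ (hence, after the appropriate column permutation, the population target $A$) is block-diagonal with the same block structure as $S_Y$, so that $\mathcal C(A_i)$ is exactly the invariant subspace of $S_X = A S_Y A'$ associated with the eigenvalues of the block $S_{Y,i}$. By the Davis--Kahan / $\sin\Theta$ theorem (Chapter 8.1.3 of \citet{golub2012}, as cited after Assumption \ref{as:eigengap}), for each $i$ the gap $\delta_0 := \min_{i\neq j}\min_{\mu\in l(S_{Y,i}),\,\nu\in l(S_{Y,j})}|\mu-\nu| > 0$ controls the distance between the true invariant subspace and the corresponding invariant subspace of the perturbed matrix $\widehat S_X$: there is a constant $c$ depending only on $\delta_0$ with
\begin{equation*}
\max_{i=1,\dots,m} M\Big(\mathcal C(A_i),\mathcal C(\widehat A_i)\Big) \;\le\; c\,\big\|\widehat S_X - S_X\big\| ,
\end{equation*}
valid on the event that $\|\widehat S_X - S_X\|$ is smaller than, say, $\delta_0/4$ (so the eigenvalue clusters stay separated and the matching of blocks via the permutation is well-defined). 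The metric $M$ in \eqref{eq:subspace_distance_metric} is, up to constants, the Frobenius $\sin\Theta$ distance, so translating the operator-norm $\sin\Theta$ bound into a bound on $M$ only costs a factor of $\sqrt{p}$, which is absorbed into the constant since $p$ is fixed.

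Next I would bound $\|\widehat S_X - S_X\| \le \sum_{j=1}^F \big\|Re\,\widehat f_X(\omega_j) - Re\,f_X(\omega_j)\big\| \le \sum_{j=1}^F \big\|\widehat f_X(\omega_j) - f_X(\omega_j)\big\|$. Since $F$ is a fixed finite number of frequencies, it suffices to show that the kernel spectral estimator satisfies $\|\widehat f_X(\omega) - f_X(\omega)\| = O_p\big((Th)^{-1/2}\big)$ uniformly over $\omega\in\mathcal W$. This is where Assumptions \ref{as:xt} and \ref{as:kernel} enter. The standard decomposition is bias plus stochastic term: under $\sum_h |h|\,|\gamma(h)| < \infty$ and the smoothness/normalization of $K$, the bias is $O(h)$ (or $O(h^2)$ for a second-order kernel), and the bandwidth condition $h^{9/2}T\to 0$ forces $h = o\big((Th)^{-1/2}\big)$, i.e. the bias is negligible relative to $(Th)^{-1/2}$; meanwhile $h^2 T\to\infty$ guarantees $(Th)^{-1/2}\to 0$. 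The stochastic term has variance of order $(Th)^{-1}$ by the classical periodogram-smoothing variance computation (using $\int k^2 < \infty$), and the fourth-moment/cumulant control afforded by $\int k^4 < \infty$ together with the absolute summability of $\gamma$ gives the uniform-in-$\omega$ $O_p\big((Th)^{-1/2}\big)$ rate over the finite set $\mathcal W$. These are exactly the standard consistency rates for kernel spectral density estimators, so I would cite \citet{brillinger81} (or \citet{brockwell_davis}) for the precise statements rather than redo the cumulant bookkeeping.

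Finally I would assemble the pieces: with probability tending to one the perturbation event $\{\|\widehat S_X - S_X\| < \delta_0/4\}$ holds (because $\|\widehat S_X - S_X\| = O_p((Th)^{-1/2}) = o_p(1)$), on that event the $\sin\Theta$ bound applies and the block-matching permutation producing $\widehat A$ from $\widehat L_X$ is the correct one, and therefore
\begin{equation*}
\max_{i=1,\dots,m} M\Big(\mathcal C(A_i),\mathcal C(\widehat A_i)\Big) \;\le\; c\,\|\widehat S_X - S_X\| \;=\; O_p\Big(\tfrac{1}{\sqrt{Th}}\Big).
\end{equation*}
The main obstacle I anticipate is not any single hard estimate but the bookkeeping around the unknown permutation: one must argue carefully that on the high-probability event the eigenvalues of $\widehat S_X$ split into $m$ clusters that are in one-to-one correspondence with the blocks $S_{Y,i}$ (this uses Weyl's inequality plus Assumption \ref{as:eigengap}), so that "the estimated mixing matrix after permuting the columns of $\widehat L_X$" is unambiguous and the per-block $\sin\Theta$ bounds can be taken simultaneously. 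A secondary technical point is ensuring the spectral-estimator rate is genuinely uniform over $\mathcal W$ rather than pointwise; since $|\mathcal W| = F$ is fixed this is immediate from a union bound, but it should be stated. Everything else — the Davis--Kahan step, the triangle inequality over the $F$ frequencies, the $\sqrt p$ conversion between operator and Frobenius norms — is routine.
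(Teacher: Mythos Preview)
Your proposal is correct and follows essentially the same route as the paper: both reduce the problem to a perturbation bound for invariant subspaces from \citet{golub2012} (the paper invokes Theorem~8.1.10 and Corollary~8.1.11, which is the same Davis--Kahan machinery you cite) and then control $\|\widehat S_X - S_X\|_2$ via the standard $O_p\big((Th)^{-1/2}\big)$ rate for kernel spectral estimators, citing \citet{brillinger81}. If anything, your write-up is more careful than the paper's about the permutation/block-matching step and the high-probability event on which the perturbation bound applies; the paper's proof dispatches these points implicitly.
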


\begin{proof}
See Appendix for details of the proof.
\end{proof}

Next, we state the large sample result of the test statistic used in the pairwise test described in Section \ref{sec:permuation_subseries}. In addition to Assumptions \ref{as:xt}, \ref{as:kernel}, let $X_t$ have finite moments of all orders. Then under $H_0:  R_{ \tilde{Y}, a,b} (\omega) = 0,\; \forall \omega \in \lbrack -\pi , \pi\rbrack$, by an application of Theorem 2.5 of \citet{eichler08}, an estimated version of the test statistic in \eqref{eq:cross_spectrum_test_statistic} is such that
\begin{equation}\label{eq:test_statistic_null_distribution}
T \sqrt{h} \widehat{D}(\tilde{Y},a,b) - \frac{\mu_0}{\sqrt{h}} \overset{d}{\longrightarrow} N(0 , \sigma_0^2)
\end{equation} 
where $\mu_0 = \int_{-\pi}^{\pi} K^2(u) du  $ and $\sigma_0^2 = 2 \pi \int_{-2 \pi}^{2 \pi}\; \Big( \int_{- \pi}^{  \pi} K(u)K(u+v) du \Big)^2\; dv $ and $\overset{d}{\longrightarrow}$ denotes convergence in distribution. For the simulation study and applications in Sections \ref{sec:simulations}, \ref{sec:application}, we utilize the above result to obtain critical values for the pairwise tests. Next, under the alternative hypothesis $H_1:  R_{ \tilde{Y}, a,b} (\omega) \neq 0,\; \textrm{for some} \; \omega \in \lbrack -\pi , \pi\rbrack$, application of Theorem 5.1 in \citet{eichler08} shows that
\begin{equation} \label{eq:test_statistic_alt_result}
T h \widehat{D}(\tilde{Y},a,b) \overset{p}{\longrightarrow}  +\infty 
\end{equation}
thereby resulting in a consistent test. 

Finally, we state the result concerning the estimation of the true adjacency matrix $E$ described in Section \ref{sec:permuation_subseries}. Recall that our method conducts $p_0 = p(p-1)/2$ pairwise tests on the components of $\tilde{Y}_t = L_X' X_t$ with the test statistic $\tilde{D}$ in \eqref{eq:cross_spectrum_test_statistic}. Denote $\alpha_{a,b}$ as the significance level chosen while testing for zero spectral coherence between components $a$ and $b$. Let $\widehat{E}$ be the estimated adjacency matrix obtained by carrying out pairwise testing on the components of $\widehat{L}_X^{'}X_t$. 

\begin{theorem} \label{thm:adjacency_consistency}
Suppose that Assumptions \ref{as:eigengap}, \ref{as:xt}, \ref{as:kernel}  are satisfied. Then, as $T \rightarrow \infty
$ we have,
\begin{itemize}

\item[(a).] $P\Big( \widehat{E} = E \Big) \rightarrow 1$  if  $m=1$.
\item[(b).] $P\Big( \widehat{E} = E \Big) \geq 1- \sum\limits_{ \substack{a < b } } \alpha_{a,b}$  if  $m=2,3,\hdots,p$.
\end{itemize}
where $m$ denotes the true number of subseries among components of $Y_t$ defined in \eqref{eq:main_model}. 
\end{theorem}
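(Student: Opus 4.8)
The plan is to split the analysis along the two regimes $m=1$ and $m\geq 2$, because the source of error in $\widehat{E}$ is different in each case. The key observation is that $\widehat{E}$ can fail to equal $E$ for two distinct reasons: (i) the eigenvector matrix $\widehat{L}_X$ is only a consistent estimate of $L_X$ (up to column permutation), so the recovered series $\widehat{L}_X' X_t$ differs from $\tilde Y_t = L_X' X_t$; and (ii) even with the exact series $\tilde Y_t$, each of the $p_0 = p(p-1)/2$ pairwise tests has a nonzero probability of a Type I error (when the true coherence is zero) controlled by $\alpha_{a,b}$, while tests against a true nonzero coherence are consistent and hence asymptotically never fail. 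I would first record, as a consequence of Theorem 2.1, that $\|\widehat{L}_X - L_X P\|$ (for the appropriate permutation $P$, which we absorb into the labelling) is $O_p((Th)^{-1/2})\to 0$, and that by continuity of the kernel spectral estimator and of the functional $D(\cdot,a,b)$ in its arguments, $\widehat{D}(\widehat{L}_X'X, a,b) - \widehat{D}(\tilde Y, a,b) = o_p(1)$ after the same rescaling used in \eqref{eq:test_statistic_null_distribution}–\eqref{eq:test_statistic_alt_result}. This reduces the problem to analysing the tests applied to the idealized series $\tilde Y_t$, whose pairwise coherences are exactly the entries dictated by $E$.

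For part (a), when $m=1$ every pair $(a,b)$ has a genuinely nonzero spectral coherence at some frequency (otherwise that component would split off as its own subseries, contradicting $m=1$). Hence for every pair the alternative hypothesis holds, and by the consistency statement \eqref{eq:test_statistic_alt_result}, $Th\,\widehat{D}(\tilde Y,a,b)\overset{p}{\to}+\infty$, so the test rejects $H_0$ with probability tending to $1$. Taking a union bound over the finitely many pairs $p_0$, all entries $e_{a,b}$ are correctly set to $1$ with probability $\to 1$, giving $P(\widehat{E}=E)\to 1$. (Strictly, I should also note that when $m=1$ the true adjacency matrix $E$ corresponds to a connected graph on all $p$ vertices, which is compatible with—but does not require—every $e_{a,b}=1$; the argument above shows the stronger conclusion that every off-diagonal entry is correctly identified, which certainly yields the correct connected structure.)

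For part (b), with $m\geq 2$ partition the pairs into $\mathcal{P}_{\mathrm{in}}$ (both components in the same true subseries) and $\mathcal{P}_{\mathrm{out}}$ (components in different subseries). For $(a,b)\in\mathcal{P}_{\mathrm{out}}$ the true coherence is identically zero by \eqref{eq:model_assumption_cross_spectral}, so $H_0$ holds; these are the only pairs that can be wrongly flagged, and each is wrongly flagged with asymptotic probability at most $\alpha_{a,b}$ by \eqref{eq:test_statistic_null_distribution}. For $(a,b)\in\mathcal{P}_{\mathrm{in}}$—provided each true subseries is itself ``coherence-connected'' so that the within-block structure is recoverable—the relevant pairs carry nonzero coherence and \eqref{eq:test_statistic_alt_result} makes the probability of missing them vanish; one then absorbs these $o(1)$ terms. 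A union (Bonferroni) bound gives
\begin{equation}
P\big(\widehat{E}\neq E\big)\;\leq\;\sum_{(a,b)\in\mathcal{P}_{\mathrm{out}}}\alpha_{a,b}\;+\;o(1)\;\leq\;\sum_{a<b}\alpha_{a,b}+o(1),
\end{equation}
which yields the stated bound $P(\widehat{E}=E)\geq 1-\sum_{a<b}\alpha_{a,b}$ in the limit.

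The main obstacle I anticipate is step (i): rigorously transferring the asymptotic null and alternative behaviour from the tests run on $\tilde Y_t=L_X'X_t$ to the tests actually run on $\widehat{L}_X'X_t$. Theorem 2.1 gives consistency of the column space but not automatically a rate fast enough to guarantee that the $O_p((Th)^{-1/2})$ perturbation of the transforming matrix does not disturb the $T\sqrt{h}$-scaled statistic in \eqref{eq:test_statistic_null_distribution}; one needs to check that the contribution of the estimation error to $\widehat{D}$ is $o_p(1/(T\sqrt h))$, i.e. a genuinely higher-order term, using the smoothness of $\widehat f_X$ in the data and the bandwidth conditions in Assumption \ref{as:kernel}(b). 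A secondary subtlety is the implicit assumption that within each true subseries the components form a single connected component of the coherence graph; if that is part of the model hypotheses it should be invoked explicitly, and otherwise the conclusion must be phrased in terms of the coherence graph rather than the block structure.
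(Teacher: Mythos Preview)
Your approach is essentially the same as the paper's: a Bonferroni/union bound that combines the asymptotic size statement \eqref{eq:test_statistic_null_distribution} for pairs under $H_0$ with the consistency statement \eqref{eq:test_statistic_alt_result} for pairs under $H_1$, handling the cases $m=1$ and $m\geq 2$ separately.

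The paper's own proof is in fact shorter and less careful than yours. It applies \eqref{eq:test_statistic_null_distribution}--\eqref{eq:test_statistic_alt_result} directly to the tests as if they were run on $\tilde Y_t=L_X'X_t$, without any explicit argument for the transfer from $\widehat{L}_X'X_t$; and in part~(a) it tacitly writes $P(\bigcap_{a<b}\{\widehat e_{a,b}=e_{a,b}\})=1-P(\bigcup_{a<b}\{\widehat e_{a,b}<e_{a,b}\})$, which presumes that every off-diagonal $e_{a,b}=1$ when $m=1$. The two obstacles you flag---whether the $O_p((Th)^{-1/2})$ perturbation of $\widehat L_X$ disturbs the $T\sqrt{h}$-scaled null limit, and the implicit assumption that each true subseries is coherence-connected---are genuine subtleties that the paper's proof simply does not address; your proposal is more scrupulous on both points, not deficient.
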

\vspace{-0.2cm}
\begin{proof}
See Appendix for details of the proof. 
\end{proof}

\noindent The proof stems from the asymptotic results of the test statistic under the null and alternative  given in \eqref{eq:test_statistic_null_distribution} and \eqref{eq:test_statistic_alt_result}. When the true number of underlying subseries $m$ is more than one, the presence of multiple tests leaves the above result with a lower bound on the probability of correct adjacency matrix detection. A correction to the choice of the significance levels or a false discovery rate (FDR) based approach can potentially be adopted and requires future investigation. Also with the possibility of $m=1$, observe that unlike traditional PCA, no assumption is made regarding the existence of a lower-dimensional subseries that has no correlation with other subseries.

\section{Simulation Study}
\label{sec:simulations}

In this section we illustrate the performance of our method using a few simulation examples. Following  \citet{yao2018}, an estimate of the mixing matrix $A$, after permuting the columns using the pairwise testing procedure described in \ref{sec:permuation_subseries}, given by $\widehat{A} = (\widehat{A}_1,\widehat{A}_2,\hdots,\widehat{A}_{\widehat{m}}) $ is regarded as a \underline{\textit{correct segmentation}} if i). $\widehat{m} = m$, ii). rank$(\widehat{A}_j)$ = rank$(A_j)$ for each $j=1,2,\hdots,m$ and iii). $M^2( \mathcal{C}(\widehat{A}_j) , \mathcal{C}(A_j)  ) = \min_{1 \leq i \leq m} M^2( \mathcal{C}(\widehat{A}_i) , \mathcal{C}(A_i)  )$ where the metric $M$ is defined in \eqref{eq:subspace_distance_metric}. We simulate from the below defined models 200 times for series lengths $T=200,500,1000$ and report the percentage of \textit{correct segmentation}. In addition, the average and maximum estimation errors given by 
\begin{equation} \label{eq:sim_error_definitions}
\frac{1}{m}\sum_{j=1}^m M^2( \mathcal{C}(\widehat{A}_j) , \mathcal{C}(A_j)  ) \;\; \textrm{and} \;\;  \max_{1 \leq j \leq m} M^2( \mathcal{C}(\widehat{A}_j) , \mathcal{C}(A_j)  )
\end{equation}
are also reported. Note that summaries from these error measures are reported only for replications leading to a \textit{correct segmentation}. 

\noindent We consider the model in \eqref{eq:main_model} with the $p \times p$ orthogonal matrix  $A$ being randomly generated using the technique from \citet{rortho}. We select the Bartlett-Priestley kernel (\citealp{priestley84}) for estimating the spectral matrix in \eqref{eq:kernel_spectral_estimator}. This kernel $K(\cdot)$ and its Fourier transform $k(\cdot)$ are given by 
\begin{equation}\label{eq:bp_kern_fourier_details}
K(\theta) = \frac{3}{4 \pi} \Big( 1-\frac{\theta^2}{\pi^2} \Big) 1_{\lbrack -\pi,\pi \rbrack }(\theta), \;\; k(u) = \frac{3}{\pi^2 u^2} \Big( \frac{\sin(\pi u)}{\pi u} - \cos (\pi u) \Big).
\end{equation}
It can be verified from the above that this kernel choice and its Fourier transform satisfy Assumption \ref{as:kernel}(a). For carrying out the $p(p-1)/2$ pairwise tests of zero spectral coherence (Step 2 of Algorithm \ref{algo:two_step_procedure}), we obtain false discovery rate (FDR) driven corrections to the p-values from the tests (\citealp{fdr_by}). We adopt this correction and under independence of the test statistics or a positive regression dependence, this correction keeps the error rate below the prescribed level of 0.05.

\subsection{Bandwidth selection} \label{sec:bandwidth_slection}
One important tuning parameter that needs selection is the bandwidth $h$ used in the kernel estimator in \eqref{eq:kernel_spectral_estimator}. The size and power of the test described in Section \ref{sec:permuation_subseries} depends on $h$. Additionally, and equally important, the rate given in Theorem \ref{thm:colspace_consistency} along with the eigenvalue separation in Assumption \ref{as:eigengap} in finite sample situations also depends on $h$. Bandwidth selection methods for nonparametric kernel spectral estimators exist in the literature. \citet{beltrao_bandwidth} and \citet{buhlmann_bandwidth} are a couple of examples wherein the former propose a cross-validated approach while the latter carry out an iterative method to selecting locally optimal window widths. Our approach here aims to find a compromise between the `conflicting' requirements on the bandwidth $h$. More precisely, in Assumption \ref{as:kernel} the bandwidth $h$ tends to zero at a certain rate. This is needed for the large sample result in \eqref{eq:test_statistic_null_distribution}. In contrast, the eigengap requirement from Assumption \ref{as:eigengap} and the consistency result in Theorem \ref{thm:colspace_consistency} indicate that excessively small bandwidths are undesirable as it affects the consistency of the estimated column space $\mathcal{C}(\widehat{A}_i)$ for each $i=1,2,\hdots,m$.

We first report the eigenvalue separation (eigengap) of all the models for an estimate of the matrix $S_Y$ in \eqref{eq:sx_sy_matrices} using a kernel spectral estimator at different bandwidths. The sequence of bandwidths considered here are such that they satisfy Assumption \ref{as:kernel} in Section \ref{sec:theory}. We observe from Figure \ref{fig:eigengap_models} that the eigengap is higher at larger  bandwidths. In our simulation study, we considered a sequence of bandwidths $h = T^{-q}$ where $q$ is obtained from a sequence $\{ 0.1,0.15,\hdots,0.45 \}$ and we observed that our method performs best for larger bandwidths (smaller values of $q$)  and this is in agreement with the result in Theorem \ref{thm:colspace_consistency}. However an excessively large bandwidth affects the asymptotic result of the test statistic in \eqref{eq:test_statistic_null_distribution}. Hence in finite sample situations, we recommend a choice of bandwidth $h=T^{-q}$ where $q \approx 0.15$.

Another kernel that satisfies Assumption \ref{as:kernel} is the Parzen kernel (see pg. 448 of \citealp{priestley84}). In finite sample situations, this kernel choice produced accuracy results similar to the Bartlett-Priestley kernel. The recommended bandwidth with this kernel choice is again $h=T^{-q}$ where $q \approx 0.15$

\begin{figure}[H]
\centering
\includegraphics[width = 3.2in , height = 2in]{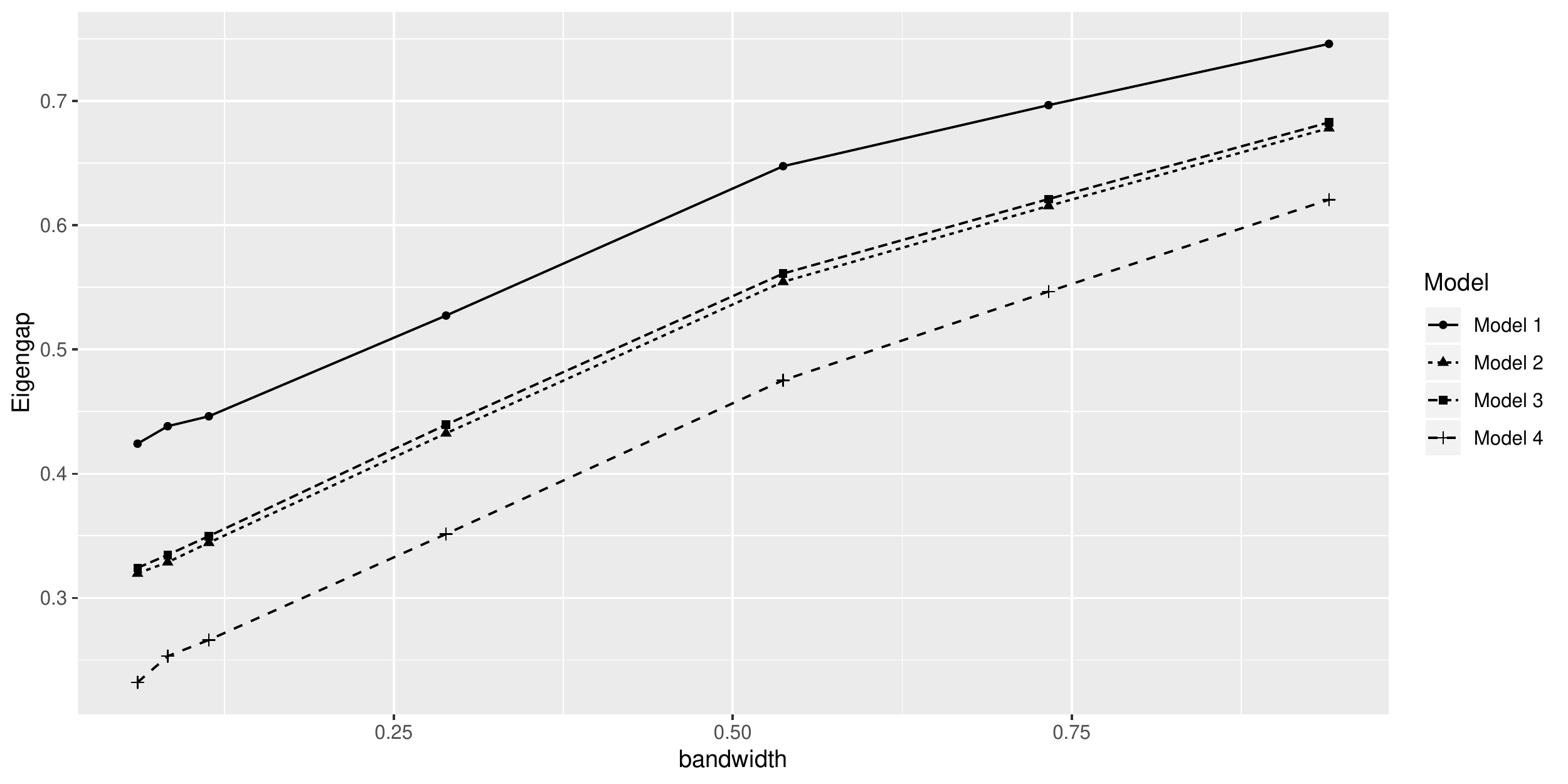}
\caption{Eigenvalue separation (eigengap) among the blocks of $S_Y$ obtained using a kernel spectral estimator of the spectral matrix at various bandwidths (x-axis). Eigengaps are based on averages over 100 replications of the various models with series length $T=500$.  } \label{fig:eigengap_models}
\end{figure}    


Next, we simulate from 5 models at various series lengths and present i) the average and maximum errors from  the measure $M^2$ defined in \eqref{eq:sim_error_definitions} and ii) the percentage of \textit{correct segmentation} out of 200 replications. Models 1 and 2 are of dimension $p=6$ with the former being the same as Example 1 of \citet{yao2018} and Models 3 and 4 are of dimension $p=9$. Models 2 and 3 are chosen such that the former has a strong moving average component and the latter has a component with strong pseudo-periodocity. In these two models we witness a better performance of the frequency domain method FC-PCA in contrast to the time domain method. Model 5 is of dimension $p=7$ and is the case where $m=1$ i.e the segmentation into the lower dimensional subseries does not exist.

\noindent \textbf{Model 1}: We take $p=6$, $m=3$ and the components of $Y_t$ are given by $Y_{k,t} = z_{1,t+k-1}$ for $k=1,2,3$, $Y_{k,t} = z_{2,t+k-4}$ for $k=4,5$ and $Y_{k,t} = z_{3,t}$ for $k=6$. Here, $z_{1,t}$ follows a ARMA$(2,4)$ with AR coefficients $(0.5,0.3)$ and MA coefficients $(-0.9,0.3,1.2,1.3)$ and innovations following i.i.d $N(0,1)$, $z_{2,t}$ follows a ARMA$(2,3)$ with AR coefficients $(0.8,-0.5)$ and MA coefficients $(1,0.8,1.8)$ and innovations following i.i.d $N(0,3)$, $z_{3,t}$ follows a ARMA$(2,2)$ with AR coefficients $(-0.7,-0.5)$ and MA coefficients $(-1,-0.8)$ and innovations following i.i.d $N(0,5)$. 

\begin{figure}[H]
\begin{minipage}{0.45\textwidth}
\begin{tabular}{|c|c|c|c|}
\hline
T & Method & max. $M^2$ & avg. $M^2$ \\
\hline
\multirow{2}{*}{200} & FC-PCA & 0.225 & 0.139 \\
 & TS-PCA & 0.142 & 0.085  \\
\hline
\multirow{2}{*}{500} & FC-PCA & 0.061 & 0.040  \\
 & TS-PCA & 0.076 & 0.044  \\
\hline
\multirow{2}{*}{1000} & FC-PCA & 0.028 & 0.018  \\
 & TS-PCA & 0.046 & 0.025  \\
\hline
\end{tabular}
\end{minipage}
\begin{minipage}{0.5\textwidth}
\includegraphics[scale=0.4]{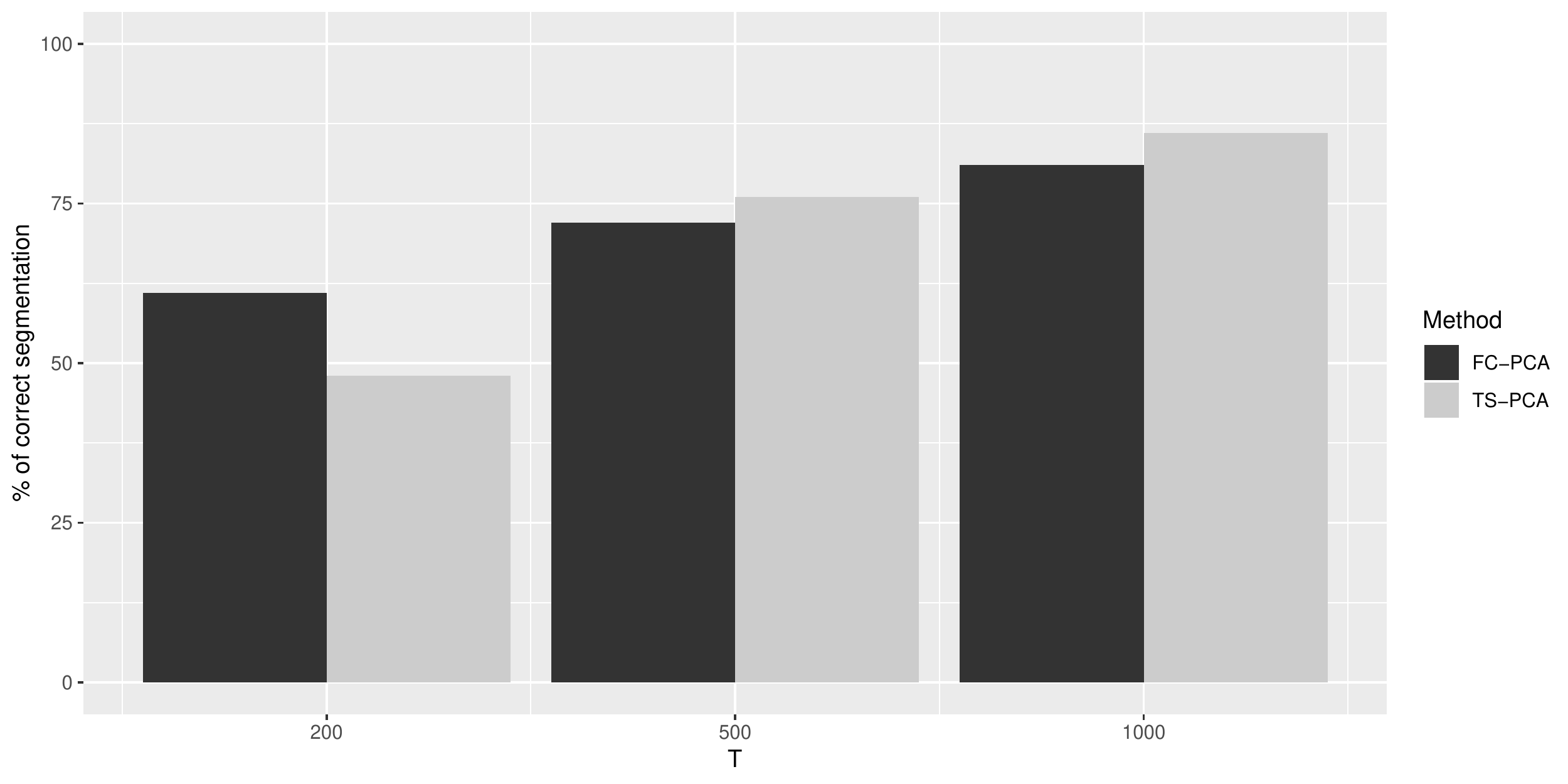}
\end{minipage}
\caption{\textbf{Model 1}, Left - Table of maximum $M^2$ and average $M^2$ errors defined in \eqref{eq:sim_error_definitions}  for the two competing methods: FC-PCA (proposed method) vs TS-PCA. Right - Percentage of \textit{correct segmentation} defined above for the two competing methods.} \label{fig:table_figure_model1}
\end{figure}

\noindent \textbf{Model 2}: We take $p=6$, $m=3$ and the components of $Y_t$ are given by $Y_{k,t} = z_{1,t+k-1}$ for $k=1,2,3$, $Y_{k,t} = z_{2,t+k-4}$ for $k=4,5$ and $Y_{k,t} = z_{3,t}$ for $k=6$. Here, $z_{1,t}$ follows a ARMA$(1,2)$ with AR coefficients $(0.9)$ and MA coefficients $(0.8,-0.2)$, $z_{2,t}$ follows a AR$(3)$ with AR coefficients $(1.25,-0.75,0.3)$ and $z_{3,t}$ follows a MA$(3)$ with MA coefficients $(1,-1,-0.8)$. The innovation terms in all the ARMA processes above are assumed to follow i.i.d $N(0,1)$.

\begin{figure}[H]
\begin{minipage}{0.45\textwidth}
\begin{tabular}{|c|c|c|c|}
\hline
T & Method & max. $M^2$ & avg. $M^2$ \\
\hline
\multirow{2}{*}{200} & FC-PCA & 0.077 & 0.046 \\
 & TS-PCA & 0.425  & 0.213   \\
\hline
\multirow{2}{*}{500} & FC-PCA & 0.023 & 0.013  \\
 & TS-PCA & 0.324 & 0.148   \\
\hline
\multirow{2}{*}{1000} & FC-PCA & 0.009 & 0.006  \\
 & TS-PCA & 0.216 & 0.104  \\
\hline
\end{tabular}
\end{minipage}
\begin{minipage}{0.5\textwidth}
\includegraphics[scale=0.4]{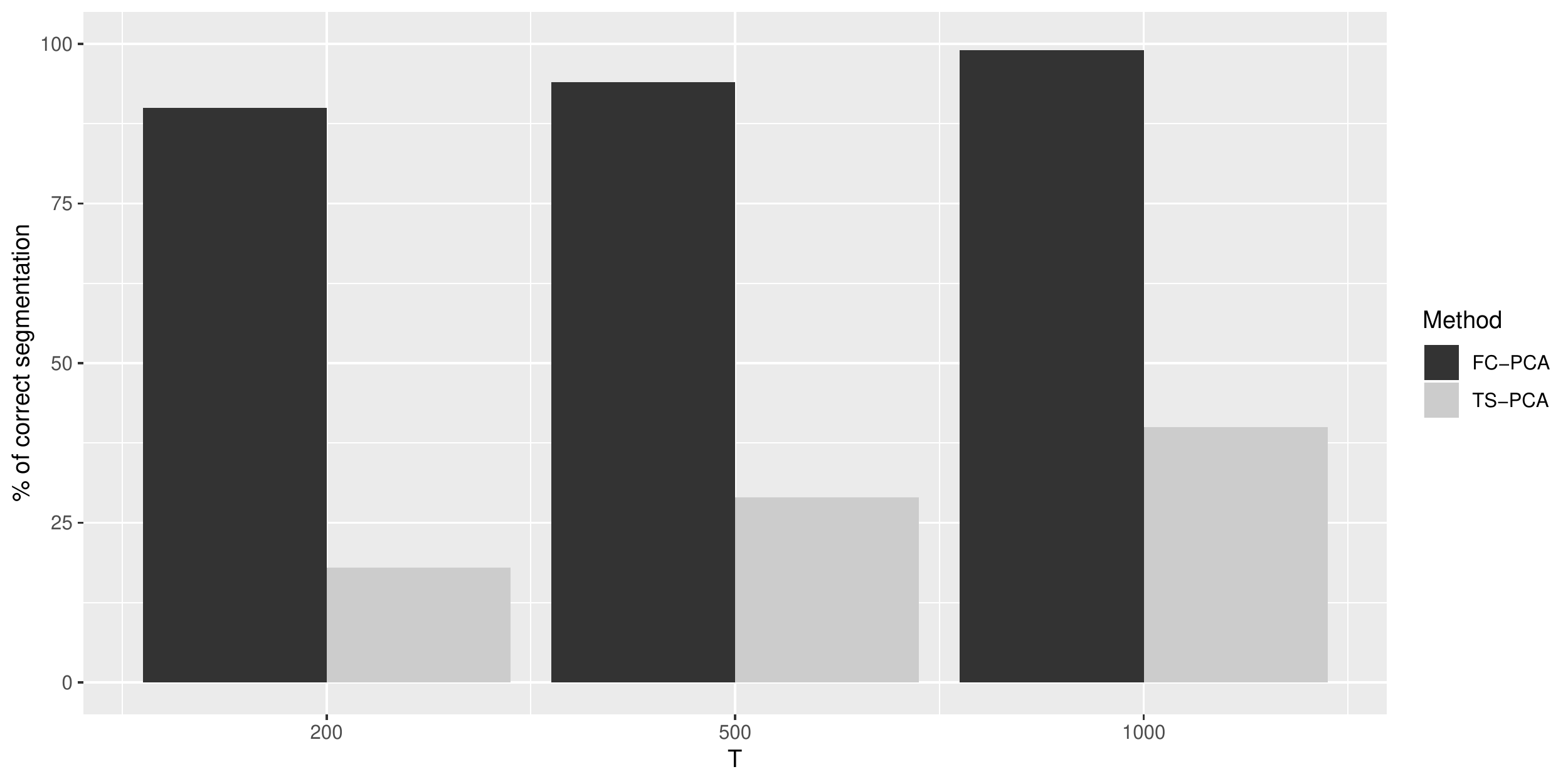}
\end{minipage}
\caption{\textbf{Model 2}, Left - Table of maximum $M^2$ and average $M^2$ errors defined in \eqref{eq:sim_error_definitions}  for the two competing methods: FC-PCA (proposed method) vs TS-PCA. Right - Percentage of \textit{correct segmentation} defined above for the two competing methods.} \label{fig:table_figure_model2}
\end{figure}

\noindent \textbf{Model 3}: We take $p=9$, $m=3$, $\phi = (1,0.7,-0.5,0.2)$, $\eta = (1,-0.9)$ and the components of $Y_t$ are given by $Y_{k,t} = \phi_k z_{1,t+k-1}$ for $k=1,2,3,4$, $Y_{k,t} = z_{2,t+k-5}$ for $k=5,6,7$ and $Y_{k,t} = \eta_k z_{3,t+k-8}$ for $k=8,9$. Here, $z_{1,t}$ follows a AR$(1)$ with AR coefficients $(0.45)$ and innovations following i.i.d $N(0,3)$, $z_{2,t}$ follows a ARMA$(2,3)$ with AR coefficients $(0.8,-0.5)$ and MA coefficients $(1,0.8,1.8)$ and innovations following i.i.d $N(0,5)$,  $z_{3,t}$ follows a ARMA$(2,2)$ with AR coefficients $(-0.7,-0.5)$ and MA coefficients $(-1,-0.8)$ and innovations following i.i.d $N(0,1)$.

\begin{figure}[H]
\begin{minipage}{0.45\textwidth}
\begin{tabular}{|c|c|c|c|}
\hline
T & Method & max. $M^2$ & avg. $M^2$ \\
\hline
\multirow{2}{*}{200} & FC-PCA & 0.021 & 0.015 \\
 & TS-PCA & 0.823  & 0.681  \\
\hline
\multirow{2}{*}{500} & FC-PCA & 0.013 & 0.009  \\
 & TS-PCA & 0.762 & 0.668   \\
\hline
\multirow{2}{*}{1000} & FC-PCA & 0.006 & 0.004  \\
 & TS-PCA & 0.727 & 0.654  \\
\hline
\end{tabular}
\end{minipage}
\begin{minipage}{0.5\textwidth}
\includegraphics[scale=0.4]{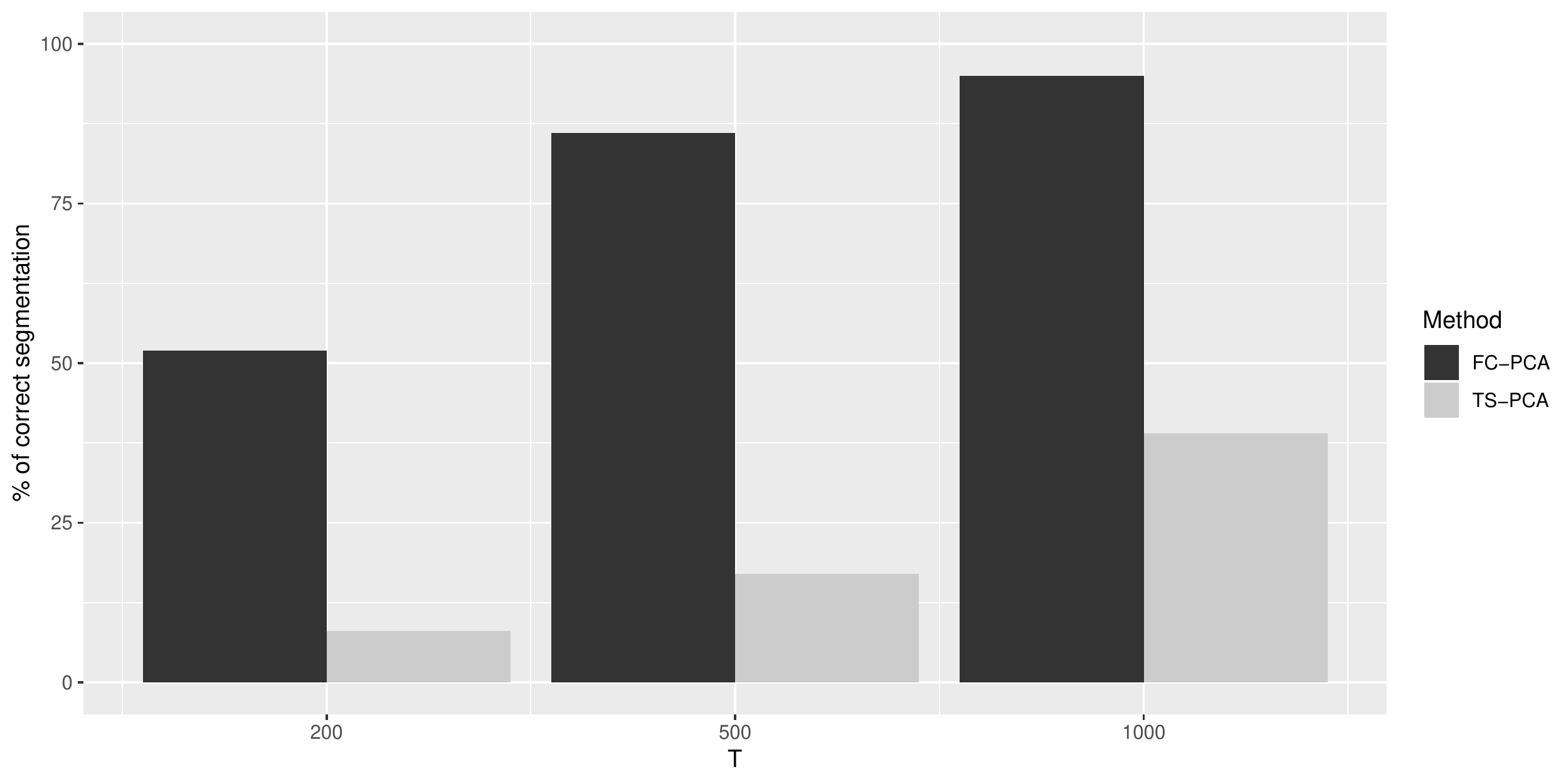}
\end{minipage}
\caption{\textbf{Model 3}, Left - Table of maximum $M^2$ and average $M^2$ errors defined in \eqref{eq:sim_error_definitions}  for the two competing methods: FC-PCA (proposed method) vs TS-PCA. Right - Percentage of \textit{correct segmentation} defined above for the two competing methods.} \label{fig:table_figure_model3}
\end{figure}

\noindent \textbf{Model 4}: We take $p=9$, $m=3$ and the components of $Y_t$ are given by $Y_{k,t} = z_{1,t+k-1}$ for $k=1,2,3,4$, $Y_{k,t} = z_{2,t+k-5}$ for $k=5,6,7$ and $Y_{k,t} =  z_{3,t+k-8}$ for $k=8,9$. Here, $z_{1,t}$ follows a ARMA$(2,4)$ with AR coefficients $(-0.4,0.5)$ and MA coefficients $(1,0.8,1.5,1.8)$, $z_{2,t}$ follows a ARMA$(2,3)$ with AR coefficients $(0.85,-0.3)$ and MA coefficients $(1,0.5,1.2)$ and $z_{3,t}$ follows a ARMA$(2,1)$ with AR coefficients $(0.9,-0.6)$ and MA coefficient $(0.5)$. The innovation terms in all the ARMA processes above are assumed to follow i.i.d $N(0,1)$. 

\noindent \textbf{Model 5}: We take $p=7$, $m=1$ and the components of $Y_t$ are given by $Y_{k,t} = z_{1,t+k-1}$ for $k=1,2,\hdots,7$. Here, $z_{1,t}$ follows a ARMA$(1,3)$ with AR coefficients $(0.75)$ and MA coefficients $(1,-0.7,-0.6)$. The innovation term is assumed to follow i.i.d $N(0,1)$.

\begin{figure}[H]
\begin{minipage}{0.45\textwidth}
\begin{tabular}{|c|c|c|c|}
\hline
T & Method & max. $M^2$ & avg. $M^2$ \\
\hline
\multirow{2}{*}{500} & FC-PCA & 0.062 & 0.045 \\
 & TS-PCA & 0.213  & 0.152  \\
\hline
\multirow{2}{*}{1000} & FC-PCA & 0.023 & 0.016  \\
 & TS-PCA & 0.151 & 0.102   \\
\hline
\multirow{2}{*}{2000} & FC-PCA & 0.011 & 0.008  \\
 & TS-PCA & 0.012 & 0.083 \\
\hline
\end{tabular}
\end{minipage}
\begin{minipage}{0.5\textwidth}
\includegraphics[scale=0.4]{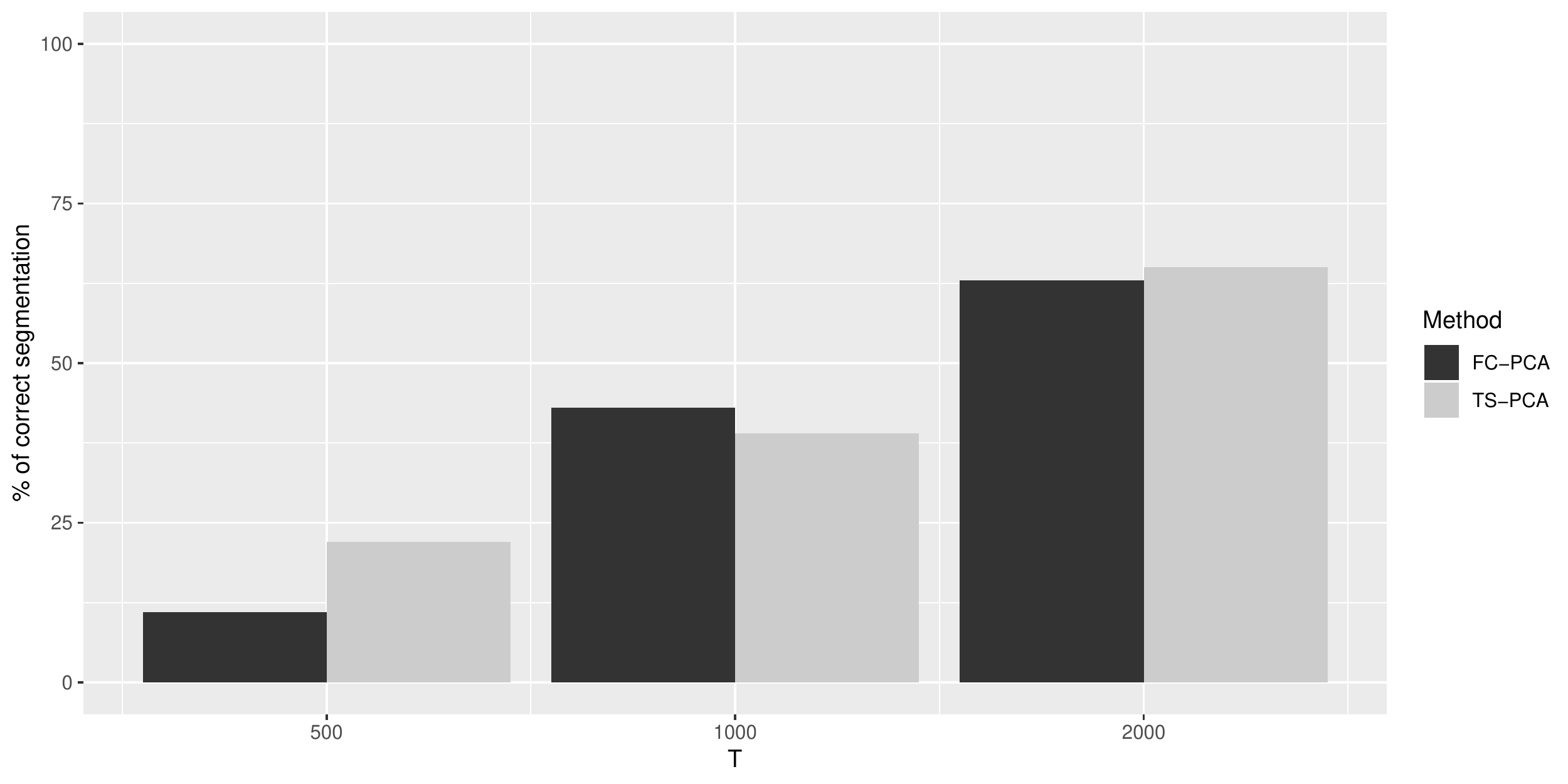}
\end{minipage}
\caption{\textbf{Model 4}, Left - Table of maximum $M^2$ and average $M^2$ errors defined in \eqref{eq:sim_error_definitions}  for the two competing methods: FC-PCA (proposed method) vs TS-PCA. Right - Percentage of \textit{correct segmentation} defined above for the two competing methods.} \label{fig:table_figure_model4}
\end{figure}

\begin{figure}[H]
\centering
\includegraphics[scale=0.42]{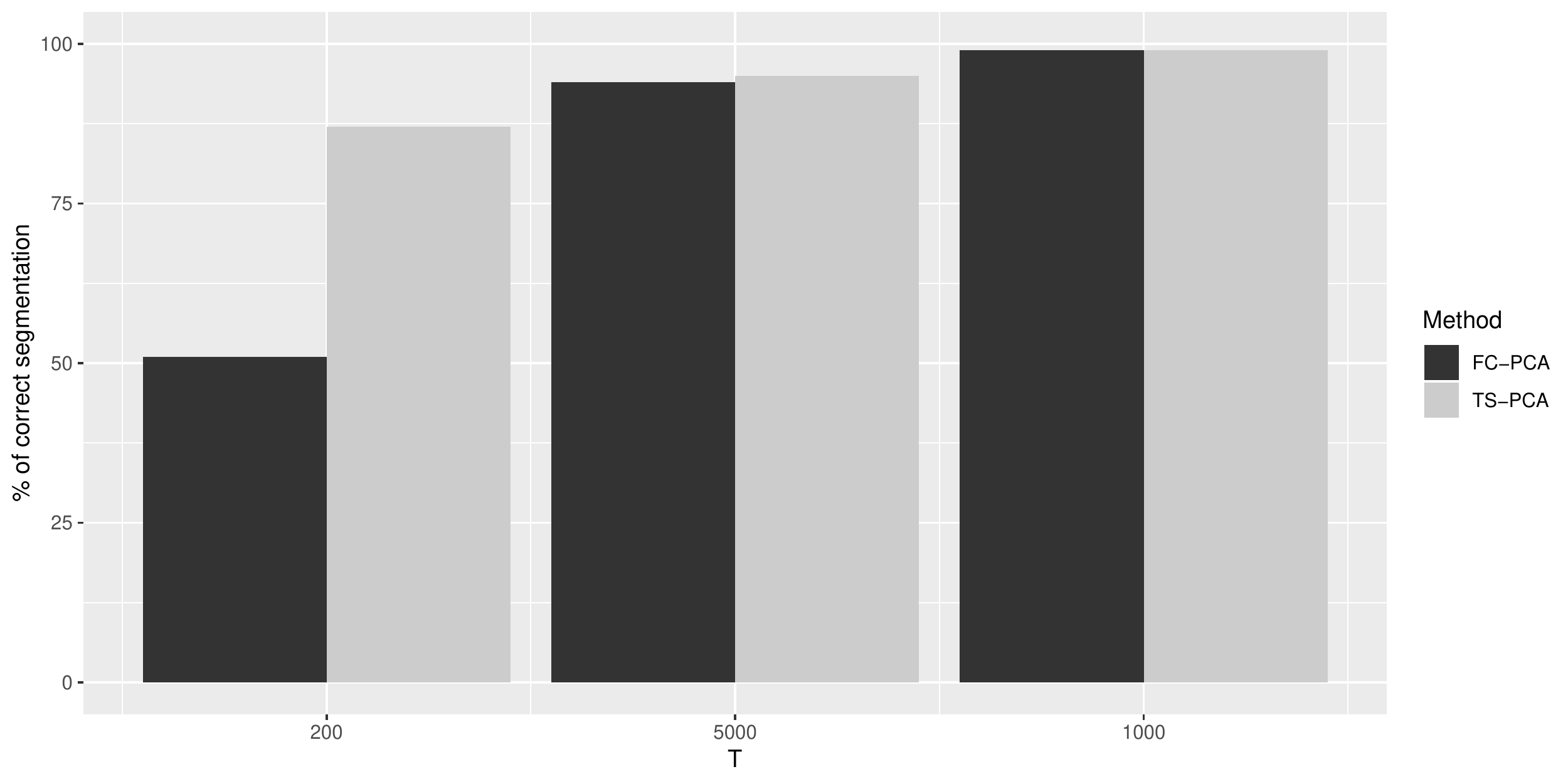}
\caption{\textbf{Model 5}  Percentage of \textit{correct segmentation} defined above for the two competing methods.} \label{fig:table_figure_model5}
\end{figure}

Next, in the left tables in Figures \ref{fig:table_figure_model1}-\ref{fig:table_figure_model4} we present the average and maximum errors due to the measure $M^2$ defined in \eqref{eq:sim_error_definitions}. Results are provided for the proposed frequency components PCA, denoted as FC-PCA, and the competing method from  \citet{yao2018} denoted as TS-PCA. These error measures are averaged over repetitions where the competing methods output the \textit{correct segmentation} for the various models. We first observe that in all models as $T$ increases, the error magnitudes decrease for both methods. We then see that at series lengths $T=500,1000$, FC-PCA performs better than TS-PCA in all the models in both error measures. In the right plots in Figures \ref{fig:table_figure_model1}-\ref{fig:table_figure_model4} we plot the percentage of \textit{correct segmentation} among 200 replications of the various models. We notice a comparable performance of the two methods for Model 1, better performance of FC-PCA for Models 2 and 3. In Model 4, TS-PCA has a higher percentage than FC-PCA for smaller sample sizes but the FC-PCA has better error measures reported in the table in Figure \ref{fig:table_figure_model4}. In Models 2 and 3, the third component $z_{3,t}$ has a strong pseudo-periodocity and it is particularly in such models that time domain methods such as TS-PCA show inferior performance.  For example, if one simulates $T=500$ observations from a univariate MA(3) model with coefficients $(1,-1,-0.8)$ (same as $z_{3,t}$ of Model 2), a Yule-Walker AR model fit yields a fitted AR$(d)$ with order $d \geq 10$. The pre-whitening suggested in \citet{yao2018} where one fits an AR model and obtains residuals becomes very challenging and affects finite sample performance.

In relation to Figure \ref{fig:eigengap_models} and Assumption \ref{as:eigengap} in Section \ref{sec:theory}, we witness that in general, FC-PCA performs better in models with a larger eigengap. We see from Figure \ref{fig:eigengap_models} that Model 4 has the smallest eigengap among the block matrices of $S_Y$ and the results in Figure \ref{fig:table_figure_model4} reflects the struggle by both methods in the percentage of \textit{correct segmentation}. In Model 5, we consider the case $m=1$ (no lower dimensional subseries). Figure \ref{fig:table_figure_model5} plots the percentage of of \textit{correct segmentation} among 200 replications of this model and we notice at series lengths $T=500,1000$, a comparable performance between the two competing methods.

\section{Application to wind speed forecasting}
\label{sec:application}

In this section we illustrate an application of our method in modeling and forecasting wind speed data. Wind power has become an important source of renewable energy with its obvious advantages such as being environmental friendly. Modeling and forecasting wind data is critical in successfully utilizing this energy source. In particular, forecasting enables operators with the ability to better plan for breaks in the influx of energy via this source. In order to maintain a reliable energy network, wind data gathered hourly and sub-hourly are both useful and necessary for conventional and renewable generators in the energy network; see the discussion in \citet{wind_speed_harsha}. Time series analysis has often been used in engineering to model both hourly and sub-hourly wind speed data. Hourly wind time series data has, in the past, been  modeled using autoregressive (AR) processes. In order to model wind turbine generators, \citet{wind_speed_ar} fit an AR time series model  to the hourly wind speed data and then simulate from this fitted model to understand the nonlinear relationship between power output and wind speeds. \citet{wind_speed_var} utilize a vector autoregressive time series model (VAR) for modeling hourly wind speeds recorded at 20 different zones in the UK. The fitted VAR model was then used for a sequential Monte Carlo simulation that led to an adequacy analysis which tests the reliability of the energy system.   \citet{wind_speed_harsha} fit a VAR model to sub-hourly wind speed data recorded at wind farm  locations in a given geographical region. In their work the sub-hourly data was seen to be nonstationary over longer time periods. To overcome this difficulty, VAR models were fit to smaller time segments wherein a stationarity assumption is more reasonable. The fitted VAR models across these different time segments are then used for further simulations from these models and helps with an economic dispatch formulation that assists in planning and management. \citet{dowell_wind} is another work that fits stationary time series  models to wind speed data over small time windows and comparisons are made with a trend plus VAR component model. 

We gather wind speed data recorded every 5 minutes (sub-hourly) from $p=7$ locations in the south eastern part of Oklahoma\footnote{Data source: National Renewable Energy Laboratory - \texttt{https://www.nrel.gov/grid/eastern-wind-data.html}} during 2011. Similar to the approach in \citet{wind_speed_harsha}, we model the wind speed data over time segments of 13 hours using a trend plus stationary component model. More precisely, let the vector $W_t$, $t=1,2,\hdots,T=156$, be the wind speed (in metres per second $m/s$) observed at any given time $t$ recorded at 100m above sea level. Here, the series length $T=156$ corresponds to 13 hours worth of sub-hourly wind speed data. Seasonality is another artifact that appears often in wind data due to changing wind patterns over days and longer periods. But because we consider sub-hourly data over a short time segment (13 hours), we do not focus on seasonal changes. We thus have $W_t = G_t + X_t$ where the vector $G_t$ is viewed as the location specific trend component and $X_t$ is the stationary component. \citet{hill_var_wind} also consider a location specific trend plus stationary component model and use a VAR model fit for the stationary part. The above model is also similar to the intercept plus autoregressive component model considered in \citet{lenzi_wind} for wind power data. In forecasting, we consider the first 13 hours of each day during September 2011, forecast the next two time steps and compare our performance with the conventional stationary VAR model, component-wise univariate AR model and the TS-PCA method from \citet{yao2018}. 

As an illustration, we consider wind speed data from the first 13 hours of September 5, 2011. First, we plot the autocorrelation function (ACF) of the 7 components of $X_t$ in Figure \ref{fig:acf_orig}. We observe a lot of cross-covariances at various lags between the 7 components. Next, we apply our FC-PCA and obtain the segmented series $Y_t$. The estimated graph $G$ from our FC-PCA method identifies 5 groups namely (1,2), (3,4), (5), (6) and (7). Figure \ref{fig:acf_seg} plots the ACF of the transformed (segmented) series $Y_t$. We notice a clear block structure from this plot wherein the first two components have cross-covariances, similarly the next two components and finally the remaining components, that were grouped as single entities by our method, do not exhibit significant cross-covariance with the other components. Next, we fitted univariate AR models to  the 7 components of $X_t$ and the resulting model orders were $(8,5,3,3,10,2,5)$. The presence of high model orders makes pre-whitening a challenging task and makes a spectral domain approach more appropriate for such types of data. 

\begin{figure}[t]
\centering
\includegraphics[scale=0.7]{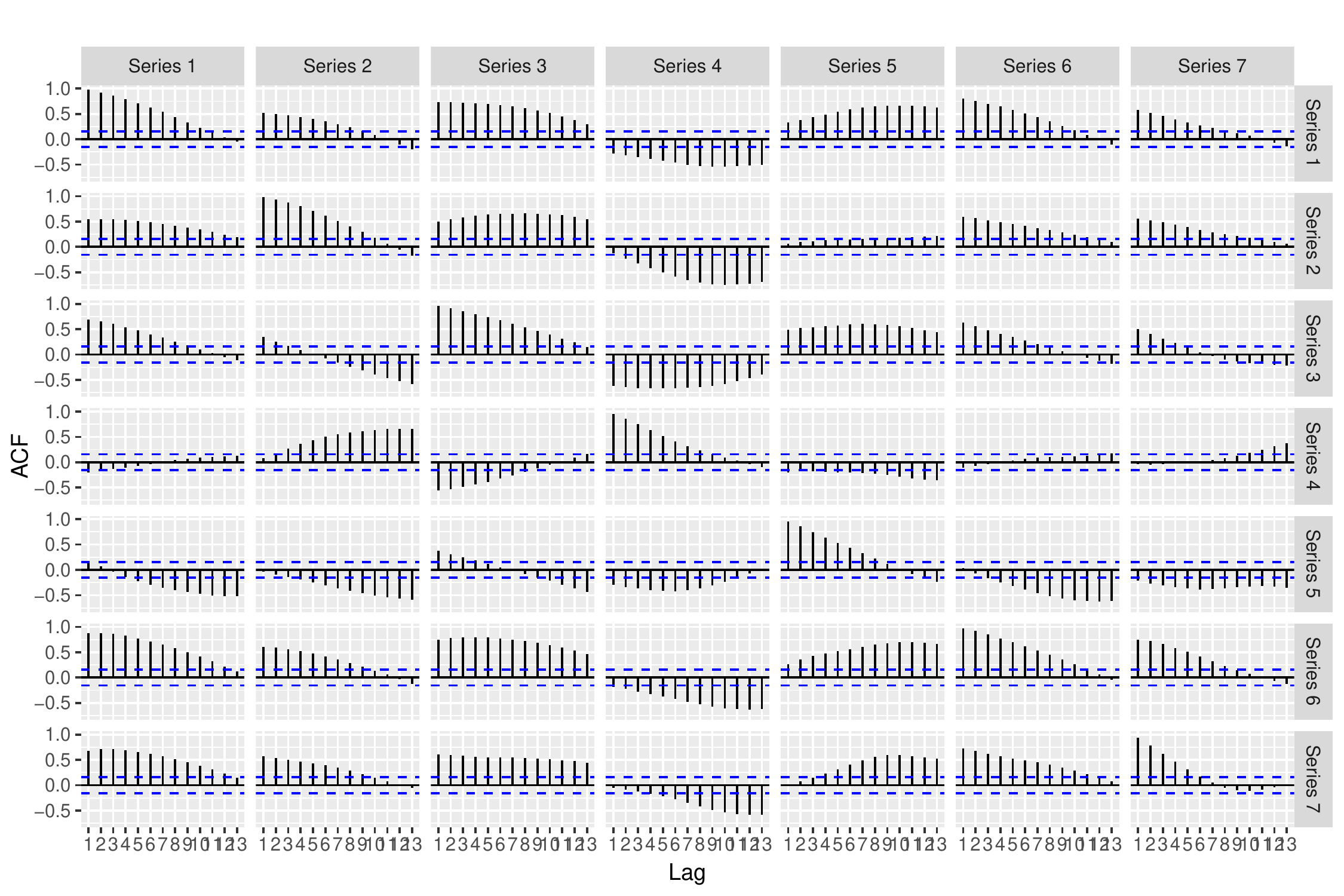}
\caption{ACF plot of 7 components of $X_t$. Data was from wind speed recorded  every 5 minutes between 12am and 11am on September 5, 2011.} \label{fig:acf_orig}
\end{figure}

\begin{figure}[t]
\centering
\includegraphics[scale=0.7]{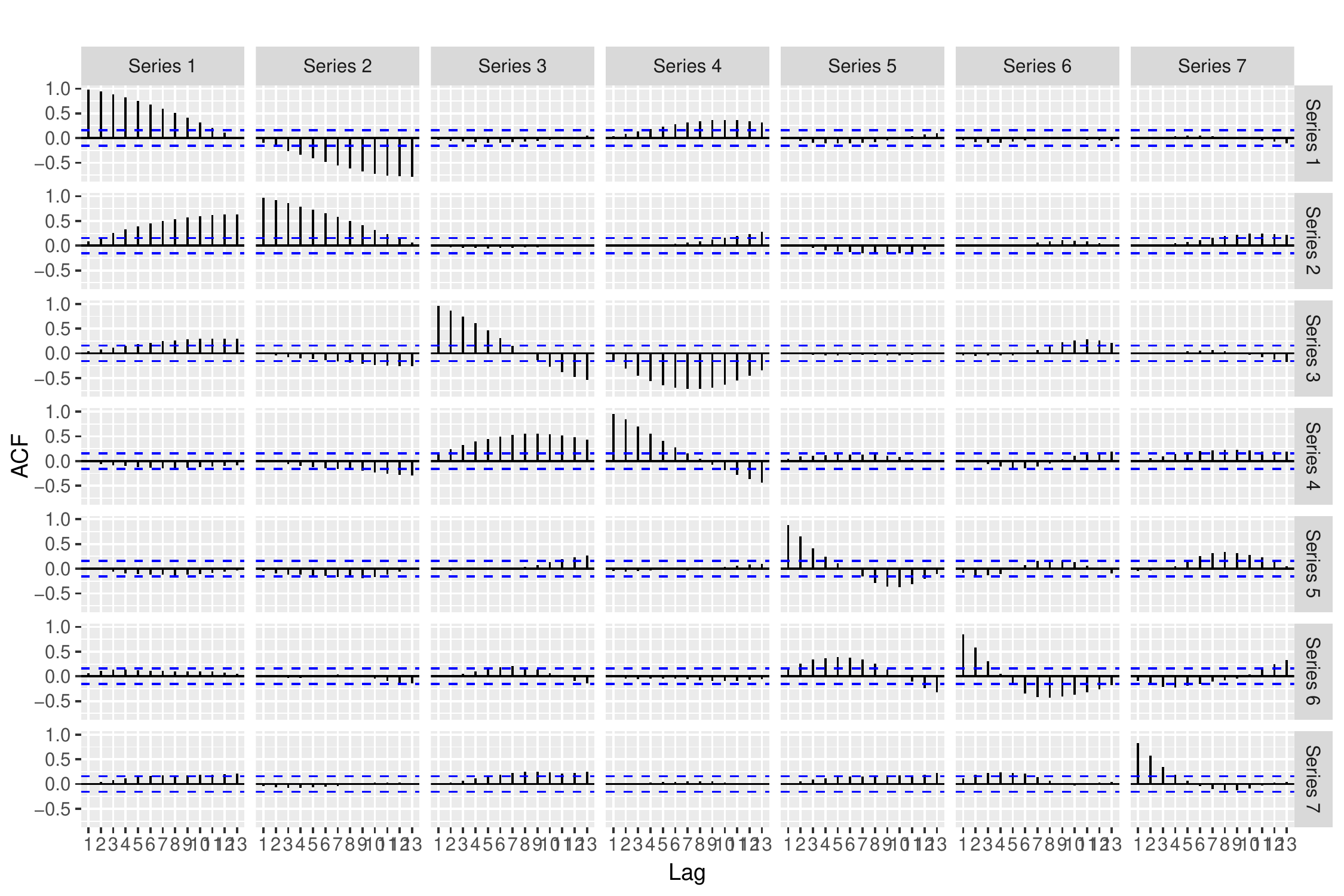}
\caption{ACF plot of 7 components of the transformed (segmented) series $Y_t$ after applying FC-PCA. Our method found the following 5 groups: (1,2), (3,4), (5), (6), (7). } \label{fig:acf_seg}
\end{figure}

Next, we compare the two steps ahead forecasting accuracy using mean squared error and standard deviation of the forecasts. Once we obtain the segmented series $Y_t$, we fit VAR models to the individual lower-dimensional subseries, compute the forecasts and then obtain forecasts of the original data using the estimated demixing matrix. We consider the first 13 hours of every single day in September 2011 and forecast wind speeds at hours 14 and 15 of those days. The MSE is obtained by computing the average of the squared error of this forecast over the 30 days of that month.  We compare forecasting performance our method (FC-PCA) with the stationary vector autoregressive model (VAR), the TS-PCA of \citet{yao2018} and univariate AR models fit to the individual components. For the FC-PCA method, we take the Bartlett Priestley kernel with bandwidth $h=T^{-0.1}$ for estimating the spectral matrix in \eqref{eq:kernel_spectral_estimator}. The level of significance is set to 0.01 in the pairwise testing from Section \ref{sec:permuation_subseries}. From Table \ref{tab:wind_foreast_accuracy} we observe that FC-PCA performs better than all the other methods in the one step and two steps ahead forecasts. The TS-PCA method was seen to be sensitive to the choice of the number of cross-covariance lags used in their method and we present only the best result there.

\begin{table}
\centering
\begin{tabular}{|c|c|c|c|}
\hline
 Method & MSE - One step ahead & MSE - Two steps ahead \\
\hline
  FC-PCA & 0.550 (0.007)   &  0.958 (0.042) \\
  VAR   &  0.610 (0.007)    &  1.276 (0.034) \\ 
  TS-PCA &  0.628 (0.008)   &  1.523 (0.067) \\
  Univariate AR & 1.678 (0.152) & 1.736 (0.154) \\
\hline
\end{tabular}
\caption{Mean squared error (MSE) for one step and two steps ahead forecasts for September 2011. The standard deviations are reported in the brackets.  } \label{tab:wind_foreast_accuracy}
\end{table}

\noindent Implementation Details: All computations in this paper were carried out using R-version 3.5.2 run on a Linux platform with a Intel(R) Xeon(R) CPU E5-2690 64-bit processor. As an example of running times we consider the application in Section \ref{sec:application} that carries out the different PCA methods for a $p=7$ dimensional series with a series length of $T=156$ (13 hours of sub-hourly wind speed data). On average, the FC-PCA takes 0.161 seconds, the TS-PCA takes 0.087 seconds and the traditional VAR takes 0.022 seconds.

\section{Concluding remarks}
\label{sec:conclusion}
In this work we proposed a new spectral domain method that finds contemporaneous linear transforms of an observed $p$-variate stationary time series. The transformed series consists of several lower-dimensional subseries such that components within a subseries have non-zero spectral coherence but components across different subseries have no spectral coherence. A two step procedure is described wherein first the eigenvectors of the sum of real parts of spectral matrices at different frequencies provide an initial solution to the transformed series. Second, a consistent test of zero spectral coherence on pairs of components is utilized to permute the components of the transformed series resulting in the desired segmentation. In addition to theoretical justifications, simulation studies are included to support the proposed method. In comparison to its time domain counterparts, our spectral domain approach avoids any pre-whitening, is hence completely nonparametric and is more appropriate for handling time series data with strong periodicities. We also show an application of our method in modeling and forecasting wind speed data and by comparing with other techniques we witness better forecasting performance of our method. 

The proposed method is fairly general can potentially be applied to numerous other real data applications such as modeling and predicting other renewable energy sources, forecasting disease progression and  predicting air pollution related variables such as $CO$ and $NO_2$. Extending the methodology to cover direct dependence (partial spectral coherence) is another interesting task to pursue. The large sample results in Section \ref{sec:theory} are derived for the fixed $p$ case. One future direction of interest is to devise a computationally feasible method as $p$ grows along with relevant theoretical justifications. Another direction of interest is in extending the methodology in Section \ref{sec:methodology} for locally stationary time series (\citealp{dahlhaus97}) wherein the asymptotic de-correlation property of frequency components (and spectral densities) at unequal frequencies no longer holds. 

\bibliographystyle{Chicago}
\bibliography{pca_timeseries}

\section*{Appendix: Proofs} \label{sec:proofs}

The proof of Theorem \ref{thm:colspace_consistency} follows from a application of Theorem 8.1.10 and Corollary 8.1.11 of \citet{golub2012} that are on the perturbation of invariant subspaces. We first state those two results in the following lemma. We denote $||R||_2$ as the $L_2$ norm of the matrix $R$.

\begin{lemma} \label{lemma:golub}
Suppose $A$ and $A+E$ are $p \times p$ symmetric matrices and that $Q=(Q_1,Q_2)$, where $Q_1$ is $p \times r$ and $Q_2$ is $p \times (p-r)$, is an orthogonal matrix such that $A \times \mathcal{C}(Q_1) \subset \mathcal{C}(Q_1)$ (invariant subspace for A). Partition the matrices $Q^{'} A Q$ and $Q^{'} E Q$ as 
$$ Q^{'} A Q = \begin{bmatrix}
D_1 & 0 \\
0 & D_1
\end{bmatrix}, \;\; Q^{'} E Q =  \begin{bmatrix}
E_{11} & E_{21}^{'} \\
E_{21} & E_{22}
\end{bmatrix} $$
If $sep(D_1,D_2) =  \min_{ \mu \in l( D_1 ), \; \nu \in l( D_2 ) }  |\mu - \nu| \; > \; 0 $ where $l( D_1 )$ and $l( D_2 )$ denote the set of eigenvalues of the matrices $D_1$ and $D_2$ respectively and $||E||_2 \leq sep(D_1,D_2)/5$, then 

\begin{itemize}
\item[(a).] There exists a $ (p-r) \times r$ matrix $P$ with $|| P ||_2 \leq 4 || E_{21} ||_2/sep(D_1,D_2) $ such that the columns of $\widehat{Q}_1 = (Q_1 + Q_2 P) (I + P^{'} P)^{-1/2}$ define an orthonormal basis for a subspace that is invariant for $A+E$.

\item[(b).]  $ || Q_1 -  \widehat{Q}_1 ||_2 \leq \frac{4}{sep(D_1,D_2)} || E_{21} ||_2$. 
\end{itemize}
\end{lemma}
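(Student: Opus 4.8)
The plan is to establish Lemma~\ref{lemma:golub} by the classical fixed-point argument for the quadratic (algebraic Riccati) equation that characterises invariant subspaces of a symmetric matrix; this is the argument underlying Theorem 8.1.10 of \citet{golub2012}. Throughout, $\|\cdot\|_2$ is the spectral norm and $\mathrm{sep} := \mathrm{sep}(D_1,D_2)$. \emph{Step 1: reduction to a Riccati equation.} For a $(p-r)\times r$ matrix $P$, the columns of $Q_1 + Q_2 P$ and those of $Q_2 - Q_1 P^{'}$ span mutually orthogonal subspaces whose direct sum is $\mathbb{R}^p$, since $(Q_1 + Q_2 P)^{'}(Q_2 - Q_1 P^{'}) = 0$ by $Q_1^{'}Q_1 = I$, $Q_2^{'}Q_2 = I$, $Q_1^{'}Q_2 = 0$. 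Hence $\mathcal{C}(Q_1 + Q_2 P)$ is invariant for $A + E$ if and only if the off-block residual vanishes, i.e. $(Q_2 - Q_1 P^{'})^{'}(A+E)(Q_1 + Q_2 P) = 0$. Substituting $Q^{'}AQ = \mathrm{diag}(D_1,D_2)$ and the block form of $Q^{'}EQ$ and simplifying, this is exactly
\begin{equation*}
D_2 P - P D_1 = -E_{21} - E_{22}P + P E_{11} + P E_{21}^{'}P =: \phi(P).
\end{equation*}

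\emph{Step 2: invert the Sylvester operator.} Let $\mathbf{T}(P) = D_2 P - P D_1$. Because $D_1,D_2$ are symmetric (diagonal blocks of an orthogonal conjugation of the symmetric matrix $A$) with $l(D_1)\cap l(D_2)=\emptyset$, $\mathbf{T}$ is an invertible linear operator whose eigenvalues are $\{\mu-\nu:\ \mu\in l(D_2),\ \nu\in l(D_1)\}$; normality of $D_1,D_2$ gives $\|\mathbf{T}^{-1}\|_2\le 1/\mathrm{sep}$. The Riccati equation therefore becomes the fixed-point problem $P = F(P) := \mathbf{T}^{-1}(\phi(P))$.

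\emph{Step 3: contraction on a ball.} Set $\rho = 4\|E_{21}\|_2/\mathrm{sep}$ and $B_\rho = \{P:\|P\|_2\le\rho\}$. Using $\|E_{11}\|_2,\|E_{22}\|_2,\|E_{21}\|_2\le\|E\|_2$ and the hypothesis $\|E\|_2\le \mathrm{sep}/5$ (so $\rho\le 4/5$), submultiplicativity gives $\|\phi(P)\|_2\le \|E_{21}\|_2(1+\rho^2) + 2\|E\|_2\rho$, which after dividing by $\mathrm{sep}$ is $\le\rho$; and $\|\phi(P)-\phi(\tilde P)\|_2\le (2\|E\|_2 + 2\rho\|E_{21}\|_2)\|P-\tilde P\|_2$, so $\|F(P)-F(\tilde P)\|_2\le q\|P-\tilde P\|_2$ with $q<1$. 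Banach's fixed-point theorem yields a unique solution $P\in B_\rho$, hence $\|P\|_2\le 4\|E_{21}\|_2/\mathrm{sep}$. Since $I+P^{'}P$ is positive definite, $\widehat Q_1=(Q_1+Q_2P)(I+P^{'}P)^{-1/2}$ is well defined, and $\widehat Q_1^{'}\widehat Q_1 = (I+P^{'}P)^{-1/2}(I+P^{'}P)(I+P^{'}P)^{-1/2} = I_r$, so its columns form an orthonormal basis of an invariant subspace of $A+E$; this proves (a).

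\emph{Step 4 and the main obstacle.} For (b), write $Q_1 - \widehat Q_1 = Q_1\bigl(I-(I+P^{'}P)^{-1/2}\bigr) - Q_2P(I+P^{'}P)^{-1/2}$; the two terms have orthogonal ranges, and via the singular value decomposition of $P$ one has $\|P(I+P^{'}P)^{-1/2}\|_2\le\|P\|_2$ and $\|I-(I+P^{'}P)^{-1/2}\|_2\le\tfrac12\|P\|_2^2$, so a short computation bounds $\|Q_1-\widehat Q_1\|_2$ by a constant multiple of $\|P\|_2 \le 4\|E_{21}\|_2/\mathrm{sep}$, and the sharper form of this estimate (equivalently, reading the bound as the subspace distance $\|P(I+P^{'}P)^{-1/2}\|_2$) gives exactly the stated constant. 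The only genuinely delicate point is Step 3: choosing the ball radius to be $4\|E_{21}\|_2/\mathrm{sep}$ and verifying that under precisely the smallness condition $\|E\|_2\le\mathrm{sep}/5$ the map $F$ both maps $B_\rho$ into itself and is a strict contraction; everything else is bookkeeping with the spectral norm and the spectral description of $\mathbf{T}^{-1}$.
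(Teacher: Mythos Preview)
The paper does not supply its own proof of this lemma: it is quoted verbatim as Theorem~8.1.10 and Corollary~8.1.11 of \citet{golub2012}, and is used only as a black box in the proof of Theorem~\ref{thm:colspace_consistency}. Your proposal reproduces the standard fixed-point (Riccati) argument that underlies those results in \citet{golub2012}, and the argument is correct; in effect you have filled in what the paper deliberately outsources by citation. One small point: in Step~4 the clean way to get the stated constant is to compute $(Q_1-\widehat{Q}_1)^{'}(Q_1-\widehat{Q}_1)$ in the eigenbasis of $P^{'}P$, which gives $\|Q_1-\widehat{Q}_1\|_2^2 = 2\bigl(1-1/\sqrt{1+\|P\|_2^2}\bigr)\le \|P\|_2^2$, yielding exactly $\|Q_1-\widehat{Q}_1\|_2\le \|P\|_2\le 4\|E_{21}\|_2/\mathrm{sep}$; your sketch gestures at this but does not quite write it down.
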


\begin{proof}[\textbf{Proof of Theorem \ref{thm:colspace_consistency}}]
First, the distance measure $M$ defined in \eqref{eq:subspace_distance_metric} is such that for $A_i$, $i=1,2,\hdots,m$
\begin{gather} \label{eq:golub_consequence}
M \Big( \mathcal{C}(A_i) , \mathcal{C}(\widehat{A}_i)  \Big) = \sqrt{1 - \frac{1}{p}tr \Big( A_i A_i^{'} \widehat{A}_i \widehat{A}_i^{'} \Big) } 
\leq \sqrt{ || I_p -  A_i A_i^{'} \widehat{A}_i \widehat{A}_i^{'} ||_2 } \\ \notag
= \sqrt{ || A_i^{'} (A_i - \widehat{A}_i)(A_i - \widehat{A}_i)^{'} A_i ||_2  } \leq \sqrt{2 || A_i - \widehat{A}_i  ||_2^2 }
\end{gather}
Next, under Assumptions \ref{as:xt}, \ref{as:kernel} and Theorems 7.4.1-7.4.3 of \citet{brillinger81}, the kernel spectral estimator is such that for any $a,b=1,2,\hdots,p$
\begin{equation}
E || \widehat{f}(\omega) - f(\omega)  ||_2^{2r} = O \Big( \frac{1}{ T^r h^r} \Big) \;\; \textrm{and} \;\; E || \widehat{f}_{ab}(\omega) - f_{ab}(\omega)  ||_2 = O \Big( \frac{1}{ \sqrt{T h} } \Big)   
\end{equation}
uniformly in $\omega \in \lbrack - \pi , \pi \rbrack$. This implies that 
\begin{equation}
|| S_X - \widehat{S}_X  ||_2  = O_p \Big( \frac{1}{ \sqrt{Th} } \Big).
\end{equation}
Finally, an application of Lemma \ref{lemma:golub} yields 
\begin{equation}
\max_{i=1,2,\hdots,m} M \Big( \mathcal{C}(A_i) , \mathcal{C}( \widehat{A}_i )  \Big) = O_p \Big( || S_X - \widehat{S}_X ||_2 \Big) = O_p \Big( \frac{1}{ \sqrt{Th} } \Big).
\end{equation}
\end{proof}

\begin{proof}[\textbf{Proof of Theorem \ref{thm:adjacency_consistency}}]
With $E = (e_{a,b})$ as the true $p \times p$ adjacency matrix and its estimated version $\widehat{E} = (\widehat{e}_{a,b}) $, $a,b=1,2,\hdots,p$ we have, 

\begin{equation}
P\Big( \widehat{E} = E \Big) = P \Big( \bigcap_{a < b} \{ \widehat{e}_{a,b} = e_{a,b} \}  \Big)
\end{equation}

\noindent For every $a < b$ and $a,b=1,2,\hdots,p$, we have,
\begin{gather*} \label{eq:null_edge_asymptotic_result}
P( \widehat{e}_{a,b} > e_{a,b} ) = P \Big( \widehat{D}(\tilde{Y},a,b) > c(\alpha_{a,b}) \Big) \; \overset{T \rightarrow \infty}{\longrightarrow} \alpha_{a,b}, 
\end{gather*}
where $\alpha_{a,b}$ is the chosen level and $c(\alpha_{a,b})$ is the critical value from the null distribution given in \eqref{eq:test_statistic_null_distribution}. Similarly, the consistency result from \eqref{eq:test_statistic_alt_result} implies that for some positive constant $C>0$, 
\begin{gather*} \label{eq:alt_edge_asymptotic_result}
P( \widehat{e}_{a,b} < e_{a,b} ) = 1 - P \Big( \widehat{D}(\tilde{Y},a,b) > C \Big) \overset{T \rightarrow \infty}{\longrightarrow} 0.  
\end{gather*}

Now, for part (a) when $m=1$, 

\begin{gather*}
P \Big( \bigcap_{a < b} \{ \widehat{e}_{a,b} = e_{a,b} \}  \Big) = 1 - P \Big( \bigcup_{a < b} \{ \widehat{e}_{a,b} < e_{a,b} \}  \Big) \geq 1 - \sum_{a<b} P \Big( \{ \widehat{e}_{a,b} < e_{a,b} \}  \Big) \\
= 1 - o(1),
\end{gather*}
for a fixed $p$. For part (b) when $m>1$,
\begin{gather*}
P \Big( \bigcap_{a < b} \{ \widehat{e}_{a,b} = e_{a,b} \}  \Big) = 1 - P \Big( \bigcup_{a < b} \{ \widehat{e}_{a,b} \neq e_{a,b} \}  \Big) 
\geq 1 - \sum_{a<b} P \Big( \{ \widehat{e}_{a,b} > e_{a,b} \}  \Big) - o(1) \\
 =   1 - \sum_{a<b} \alpha_{a,b} - o(1). 
\end{gather*} 
 
\end{proof}

\end{document}